\newcolumntype{P}[1]{>{\centering\arraybackslash}p{#1}}
\tikzset{
    -Latex,auto,node distance =1 cm and 1 cm,semithick,
    state/.style ={ellipse, draw, minimum width = 0.7 cm},
    point/.style = {circle, draw, inner sep=0.04cm,fill,node contents={}},
    bidirected/.style={Latex-Latex,dashed},
    el/.style = {inner sep=2pt, align=left, sloped}
}
\newcommand{\indep}{\rotatebox[origin=c]{90}{$\models$}}
\newtheorem{theorem}{Theorem}
\newtheorem{corollary}{Corollary}
\newtheorem{assumption}{Assumption}
\newtheorem*{assumption*}{Assumption}
\newtheorem{proposition}{Proposition}
\newtheorem{lemma}{Lemma}
\renewcommand{\algocf@captiontext}[2]{#1\algocf@typo. \AlCapFnt{}#2} % text of caption
\def\@algocf@capt@plain{top}
\renewcommand{\algocf@makecaption}[2]{%
  \addtolength{\hsize}{\algomargin}%
  \sbox\@tempboxa{\algocf@captiontext{#1}{#2}}%
  \ifdim\wd\@tempboxa >\hsize%     % if caption is longer than a line
  \hskip .5\algomargin%
  \parbox[t]{\hsize}{\algocf@captiontext{#1}{#2}}% then caption is not centered
  \else%
  \global\@minipagefalse%
  \hbox to\hsize{\box\@tempboxa}% else caption is centered
  \fi%
  \addtolength{\hsize}{-\algomargin}%
}
\begin{document}

\sectionfont{\bfseries\large\sffamily}%

\subsectionfont{\bfseries\sffamily\normalsize}%

%\begin{center}
%\noindent
%{\sffamily\bfseries\LARGE 
%}%

%\noindent

%\end{center}

\title{A Differential Effect Approach to Partial Identification of Treatment Effects}

\author[1]{Kan Chen  }
\author[2]{Jeffrey Zhang}
\author[3]{Bingkai Wang }
\author[4]{Dylan S. Small \thanks{Email: {\tt dsmall@wharton.upenn.edu.}}}

\affil[1]{Department of Biostatistics, T.H Chan School of Public Health, Harvard University, Boston, Massachusetts, U.S.A.}
\affil[3]{Department of Biostatistics, University of Michigan, Ann Arbor, Michigan, U.S.A}
\affil[2,4]{Department of Statistics and Data Science, The Wharton School, University of Pennsylvania, Philadelphia, Pennsylvania, U.S.A.}

\date{}

\maketitle

\noindent
\textsf{{\bf Abstract}:   We consider identification and inference for the average treatment effect and heterogeneous treatment effect conditional on observable covariates in the presence of unmeasured confounding. Since point identification of these treatment effects is not achievable without strong assumptions, we obtain bounds on these treatment effects by leveraging differential effects, a tool that allows for using a second treatment to learn the effect of the first treatment. The differential effect is the effect of using one treatment in lieu of the other.  We provide conditions under which differential treatment effects can be used to point identify or partially identify treatment effects. Under these conditions, we develop a flexible and easy-to-implement semi-parametric framework to estimate bounds and leverage two two-stage approaches to conduct statistical inference on effects of interest.  The proposed method is examined through a simulation study and a case study that investigates the effect of smoking on the blood level of cadmium using the National Health and Nutrition Examination Survey.}%

\vspace{0.3 cm}
\noindent
\textsf{{\bf Keywords}: Average Treatment Effect; Bounding Analysis; Conditional Average Treatment Effect; Inverse Probability Weighting; Semi-parametric Methods. }

\section{Introduction}

\subsection{Background: Identification of Treatment Effects }

Estimating treatment effects is of great importance for many applications with observational data. In this paper, we consider the problem of identification of the average treatment effect (ATE) and the conditional average treatment effect (CATE). Existing methods provide point identification for the ATE \citep{rubin1974estimating,imbens2015causal,chen2022covariate} with either a parametric or non-parametric approach under the \emph{No Unmeasured Confounding} assumption: $\{Y(1), Y(0)\} \indep Z_1 | X$,
 where $\indep$ means independence, $Z_1 \in \{0,1\}$ is a binary treatment assignment, $X \in \mathbb{R}^l, l \geq 1$ is a set of observed covariates, and $Y(1) \in \mathbb{R}$ represents the potential outcome under intervention while $Y(0) \in \mathbb{R}$ represents the potential outcome under control. The assumption holds when all confounding factors are observed or observed covariates $X$ account for all dependence between the potential outcomes and the treatment assignment. Point identification of the CATE under strong ignorability also has recently been studied by a wide range of literature \citep{hill2011bayesian,abrevaya2015estimating, athey2016recursive, wager2018estimation,kunzel2019metalearners, nie2021quasi}.

However, the no unmeasured confounding assumption rules out unobserved confounders, which can be restrictive and unrealistic.
Therefore, increasing attention has been paid to the identification of ATE and CATE in the presence of unmeasured confounding via instrumental variable (IV) estimation \citep{angrist1996identification,tan2006regression,baiocchi2014instrumental}. An instrumental variable is a variable that is associated with the treatment of interest but has no direct effect on the outcome and is independent of unmeasured confounders.  For example, a second treatment that is randomly assigned and encourages taking the treatment of interest but otherwise has no effect on the outcome would be an instrumental variable \citep{holland1988causal}.  
Point identification of the ATE using IVs requires additional strong assumptions. 
Examples include a homogeneous treatment effect so that the local average treatment effect \citep{angrist1996identification} equals the ATE, the instrument is strong enough to drive the probability of being treated from zero to one \citep{angrist2010extrapolate} or general no-interaction assumptions derived by \citet{wang2018bounded}; all of these assumptions may be hard to satisfy in practice. Hence, partial identification methods are generally considered more reliable than methods that yield point identification because they emphasize what can be learned without relying on potentially untenable assumptions. However, there are instances when even partially identifying the treatment effect by using a second treatment as an IV remains indefensible because the second treatment is associated with unmeasured confounders and is not a valid IV \citep{swanson2018partial}.

%Hence, a growing literature has derived bounds on the ATE and CATE using IV approaches such as \cite{manski1990nonparametric, manski1998monotone, heckman2001instrumental,huber2017sharp,chen2018going}. Partial identification methods obtain the lower and upper bounds for the parameter of interest under weaker assumptions than those required for point identification, as opposed to finding a point estimate for the parameter of interest. 
% These assumptions might not produce point estimates, but they might be sufficient to produce meaningful bounds. 
 %Bounding analysis can make credible inferences without making hard-to-justify and unachievable assumptions, while it requires various assumptions on instrumental variables and unmeasured confounders. Alternatively, in this paper, we leverage the differential effect by \citet{rosenbaum2006differential} to conduct the bounding analysis.  

\subsection{Motivating Example: the Effect of Smoking On the Blood Level of Cadmium}

Cadmium, a highly toxic metal, poses serious health risks, including kidney damage and cardiovascular diseases. Smoking could be a significant source of cadmium exposure, as tobacco plants readily absorb cadmium from contaminated soil. Understanding the causal effect of smoking on blood cadmium levels is essential for designing public health interventions to reduce exposure. This study leverages data from the National Health and Nutrition Examination Survey (NHANES), a comprehensive dataset that collects detailed health, dietary, and laboratory data from a representative sample of the U.S. population.

Smoking cannot ethically be randomized and the central challenge in estimating the causal effect of smoking on blood cadmium levels arises from unmeasured confounding variables. Risky health behaviors, such as poor dietary choices or excessive alcohol consumption, could be associated with both smoking and cadmium levels, making it difficult to disentangle the specific contribution of smoking. These unmeasured factors introduce biases that can compromise traditional observational analyses.

While IV is a common method for addressing unmeasured confounding, finding a valid and strong IV in this context is highly challenging and often impractical. To address this issue, we turn to the differential comparison approach introduced by \citet{rosenbaum2006differential}. Individuals who have used hard drugs serve as a secondary group for comparison because such individuals often share a general propensity for risky or unhealthy behaviors, which could also influence smoking. By comparing daily smokers who have never used hard drugs with nonsmokers who have, we attempt to isolate the specific effects of smoking on cadmium levels, independent of a shared underlying inclination for risk-taking.

\subsection{Our Contribution: Partial Identification via Differential Effects}

Suppose the treatment of interest is treatment $A$ and there is also available a second treatment $B$.  The differential effect of treatment $A$ versus $B$ is the effect of applying $A$ in lieu of applying $B$. Under the Rasch model  \citep{rasch1980probabilistic, rasch1993probabilistic} for treatment assignment, \citet{rosenbaum2006differential} showed that the differential effect is not biased by generic biases that promote both treatments. A study of treatment $A$ vs. the control alone could be strongly biased by a generic bias, whereas the differential effect can be completely unbiased under certain model assumptions like a Rasch model \citep{rosenbaum2006differential}.  
\cite{rosenbaum2013using} advocates routine use of differential effects:  ``Differential comparisons [i.e., estimation of the differential effect] are easy to do, are quite intuitive, are supported in a specific way for a specific purpose by statistical theory, and are-at worst-a harmless and transparent supplement to standard comparisons of people who look similar. Yet, differential comparisons are rarely done.'' 

\cite{rosenbaum2006differential} shows that the differential effect point identifies the treatment effect under strong assumptions such as the Rasch model for treatment.  But if these strong assumptions do not hold, when can we learn something from differential effects? In this paper, we show that under certain semi-parametric assumptions, the ATE can be bounded by the probability limits of the differential effect estimate and the inverse probability weighting estimate and thus we can bracket the true effect.  We also provide a similar result for the CATE, and provide conditions under which the ATE and CATE can be point identified using the proposed method. We develop a flexible and easy-to-implement semi-parametric framework to estimate bounds for ATE and CATE and leverage a two-stage approach to conduct the inference for effects of interest under some mild conditions. Our approach offers several advantages. It is more practical than partial identification via IVs in many scenarios, particularly when identifying a valid and strong IV is challenging or infeasible, as our approach allows the second treatment to be correlated with unmeasured confounders. The assumptions underlying the method are straightforward to interpret and can be directly evaluated using the data. Additionally, the approach provides flexibility by accommodating both parametric and non-parametric models, allowing it to address a wide range of data structures and analytical needs. To examine the performance of our proposed method, we utilize the proposed framework in the setting of continuous outcomes with various factors in detail via a simulation study. Additionally, we apply the proposed methodology to examine the effect of smoking on the blood level of cadmium using data from the National Health and Nutrition Examination Survey (NHANES). %and the effect of soft drink consumption on the occurrence of physical fights in teenagers using data from the Youth Risk Behavior Surveillance System (YRBSS). 

\subsection{Related Work and Outline}

There is a lot of literature discussing bounds on the ATE and CATE using IV since \citet{manski1990nonparametric} pioneered partial identification of the ATE under the mean independent assumption of the instrument.   \citet{balke1997bounds} considered the setting of a randomized experiment
with noncompliance to bound the ATE on a binary outcome. \citet{shaikh2011partial} imposed threshold crossing models on both the treatment and
the outcome (with the latter paper focusing on a binary outcome). \citet{flores2013partial} established partial identification of local average treatment effects with an invalid instrument. Later, \citet{chen2016bounds} derived bounds on the ATE and the average treatment effect on the treated using an instrumental variable that does not satisfy the exclusion restriction assumption. \citet{yadlowsky2018bounds} developed a loss minimization strategy for obtaining bounds on the ATE and CATE when hidden bias has a bounded effect on the odds ratio of treatment selection without instruments. \citet{flores2018average} provided a comprehensive review with the frameworks for partial identification of the ATE of a time-fixed binary treatment on a binary outcome using a binary IV. However, none of the above consider the partial identification of the ATE and CATE for violations of the IV independent of unmeasured confounders assumption.

Although partial identification via instrumental variables (IVs) is a widely used approach, it typically relies on two key assumptions: (1) the instrument affects the outcome only through its potential impact on the treatment, and (2) the instrument is independent of unmeasured confounders. However, it is hard to find valid and strong IVs in many settings. While methods for partial identification using invalid IVs, such as those proposed by \citet{chen2016bounds}, have been developed, they often yield bounds that are too wide to provide meaningful insights. More restrictive assumptions are needed to achieve narrower bounds.  

The remainder of the paper is structured as follows. Section \ref{sec:settings} introduces the settings and preliminary assumptions of our framework. Section \ref{sec: treatment effects} establishes the bounds and asymptotic properties of proposed estimators for treatment effects. We examine the proposed framework through a simulation study in Section \ref{sec:simulation} and a case study in Section \ref{sec:case studies}. We conclude our paper in Section \ref{sec:discussion}. Additional theoretical discussions, proofs,  simulation results, and case study results can be found in the Supplementary material. Code is available on \emph{https://github.com/dosen4552/Differential\_Effect\_ATE\_CATE}.

\section{Settings and Preliminary Assumptions} \label{sec:settings}

We use the Neyman-Rubin potential outcome framework \citep{neyman1923application,rubin1974estimating}. The binary treatment of interest is $Z_1 \in \{0,1\}$ and there is a  second binary treatment assignment $Z_2 \in \{0,1\}$, a set of observed covariate $X \in \mathbb{R}^l $, $l \geq 1$, and potential outcomes $Y(1), Y(0) \in \mathbb{R}$ where $Y(1)$ is the potential outcome if taking level $1$ of $Z_1$ and $Y(0)$ is the potential outcome if taking level $0$ of $Z_1$. And the observed outcome $Y = Z_1 Y(1) + (1-Z_1) Y(0) = Y(Z_1)$ (consistency).  Our interest is to study the partial identification and inference of the \emph{average treatment effect} $\tau := \mathbb{E}[Y(1) - Y(0)]$
and \emph{conditional average treatment effect} $ \tau (x) := \mathbb{E}[Y(1) - Y(0)|X=x ]$ in the presence of unmeasured confounding based on $n$ independent and identically distributed observations $\bm W_i = \{Y_i = Y_i(Z_{i1}), Z_{i1}, Z_{i2}, \mathbf{X}_i \}_{i=1}^n$. Note that we are only interested in the causal effect of taking $Z_1$.  The no unmeasured confounding assumption is often not realistic in that there is usually an unmeasured confounding factor $U$ affecting both treatment assignment and outcome in practice. Hence, throughout this paper, we consider an unobserved confounder $U$ following the assumption below:
\begin{assumption}[Unconfoundedness] \label{assump: confoundness}
$\{Y(1), Y(0) \} \indep Z_1 |  X, Z_2, U$. 
\end{assumption}

\begin{comment}
\begin{figure}
\begin{center}
\resizebox{8cm}{5cm}{
\begin{tikzpicture} 
    % x node set with absolute coordinates
    \node[state] (y) at (0,0) {$Y$};

    % y node set relative to x.
    % Locations can be:
    % right,left,above,below,
    % above left,below right, etc
    \node[state] (z1) at (3,0) {$Z_1$} ;
    \node[state] (z2) at (1.5,1.5) {$Z_2$} ;
    \node[state] (xu) at (1,3) {$X,U$} ;

    % Directed edge
    \path (z1) edge (y);
    \path (z2) edge (y);
    \path (xu) edge (y) ;
    \path[bidirected] (xu) edge (z2) ;
    \path[bidirected] (z2) edge (z1) ;

    % Bidirected edge
    \path[bidirected] (xu) edge[bend left=60] (z1);
\end{tikzpicture}
}
\end{center}
\caption{Graph representation for the relationship among observed covariate $X$, first treatment assignment $Z_1$, second treatment assignment $Z_2$, and outcome $Y$. $Z_2$ could affect outcome directly or only through $Z_1$ given observed covariate $X$ and unobserved covariate $U$. $Z_2$ is not a valid instrumental variable (IV) since it violates the IV independent of unmeasured confounders assumption.  }
\label{fig:DAG}
\end{figure}
\end{comment}
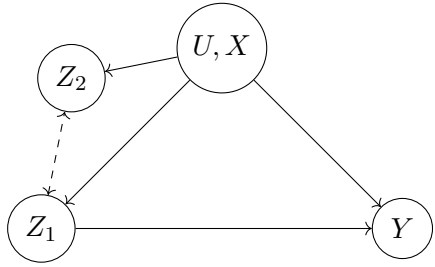
\begin{figure}
    \centering
    \begin{tikzpicture}[node distance=1cm, scale = 0.8]
    % Define node styles
    \tikzstyle{var} = [draw, circle, minimum width=0.8cm, minimum height=0.8cm]
    
    % Create nodes
    
        \node[var] (Z_1) at (0,0) {$Z_1$};  
        \node[var] (Y) at (6,0) {$Y$};  
        \node[var] (U) at (3,3) {$U, X$}; 
        \node[var] (Z_2) at (0.5,2.5) {$Z_2$};  
    
    % Connect nodes with arrows
    % Y0/U0
    \draw[->] (Z_1) -- (Y);
    \draw[->] (U) -- (Z_1);
    \draw[->] (U) -- (Z_2);
    \draw[->] (U) -- (Y);
    \draw[dashed, <->] (Z_1) -- (Z_2);
    
\end{tikzpicture}
\caption{A graphical representation of relationships among covariates $U, X$, first treatment assignment $Z_1$, second treatment assignment $Z_2$, and outcome $Y$ that satisfy Assumptions \ref{assump: confoundness} and \ref{assump: ER}. $Z_2$ is not a valid instrumental variable (IV) since it violates the IV independent of unmeasured confounders assumption.  }
\label{fig:DAG}
\end{figure}

We will also impose an assumption regarding the second treatment $Z_2$:

\begin{assumption}[Exclusion Restriction] \label{assump: ER}
    $\mathbb{E} [Y(z_1)|Z_2 = 1, X, U  ] =  \mathbb{E} [Y(z_1)|Z_2 = 0, X, U  ]$  for $z_1 \in \{0,1\}.$
\end{assumption}

Assumption \ref{assump: ER} says that if we consider the outcomes for $z_1$, the value of $Z_2$ does not affect the outcomes given observed covariate $X$ and the unobserved covariate $U$ {\textendash} a direct effect of $Z_2$ is ``excluded.'' Note that this assumption is similar in spirit to (and implied by) the standard exclusion restriction in IV analysis.  In the potential outcomes framework with IVs \citep{angrist1996identification}, the conventional exclusion restriction assumes $Y(z_1,1) = Y(z_1,0)$ for all $z_1 \in \{0,1\}$ for all subjects where $Y(z_1, z_2)$ denotes the potential outcome under treatment status $(z_1, z_2)$ \citep{angrist2009mostly}; this implies that $\mathbb{E}[Y(z_1)|Z_2 = 1, X, U] = \mathbb{E}[Y(z_1)|Z_2 = 0, X, U]$ for $z_1 \in \{0,1\}$. Additionally, $Z_2$ is not a valid instrumental variable because it violates the IV independent of unmeasured confounders assumption. The exclusion restriction assumption (assumption \ref{assump: ER}) can also be viewed as a no direct effect assumption. Note that we may relax assumption \ref{assump: ER} by imposing a direct constant effect of $Z_2$ on outcome. Detailed discussion can be found in the supplementary material (Appendix III). Additionally, we impose the positivity assumption for the propensity score \citep{hernan2006estimating}. %Since $X, U$ could be correlated with $Z_1$, hence $Z_1$ is also possibly contained in $f$ implicitly.

\begin{assumption}[Positivity] \label{assump: positivity}
$ 0 < \mathbb{E} [Z_1|Z_2,X] = \mathbb{P}(Z_1 = 1 | Z_2, X) < 1$ and $ 0 < \mathbb{E} [Z_2|X] = \mathbb{P}(Z_2 = 1 | X) < 1$.  
    
\end{assumption}

\section{Identification of Treatment Effects}
\label{sec: treatment effects}

\subsection{Bounds on Treatment Effects}

We first define
\begin{equation} \label{def: bound tau(x)}
\begin{split}
        \mu_1(x) &= \mathbb{E}[Y|Z_1 = 1, Z_2 = 0, X=x] -  \mathbb{E}[Y|Z_1 = 0, Z_2 = 1,X=x], \\
        \mu_2(x) &= \mathbb{E}\left[\frac{Z_1 Y}{\mathbb{E} [Z_1|Z_2,X] } \Bigg| X=x\right] - \mathbb{E}\left[\frac{(1 -Z_1) Y}{\mathbb{E} [1 - Z_1|Z_2,X] } \Bigg| X=x\right],\\
        \tau^+(x) &= \max \{ \mu_1(x), \mu_2(x)  \}, \\
        \tau^-(x) &= \min \{ \mu_1(x), \mu_2(x)  \},
\end{split}
\end{equation}
where $\mu_1(x)$ is the conditional differential effect while $\mu_2(x)$ is the conditional inverse probability weighting given covariate $x$. To bound treatment effects, we make the following monotonicity assumptions on the (latent) outcome regression and propensity score.

\begin{assumption}
\label{assumption: outcome_mono}
$\mathbb{E} [Y \mid Z_1 = z_1, U = u, 
 X = x]$ is monotone in $u$ for $z_1 \in \{0,1\}$ and any fixed $x$.
\end{assumption}
\begin{assumption}
\label{assumption: treatment_mono}
$\mathbb{P}(Z_1 = 1 \mid Z_2 = z_2, U = u, X = x)$ is monotone in $u$ for $z_2 = 0,1$ and any fixed $x$.
\end{assumption}

These two assumptions require that the unmeasured confounder has a consistent directional effect on the average outcome across both treatment arms and on the probability of receiving treatment. In other words, they rule out certain sign interactions between $U$ and $X$ on the outcome $Y$ and treatment $Z_1$. Such assumptions have been invoked in previous work to assess or reduce bias from unmeasured confounders \citep{VanderWeele2008, Ogburn2012}. In particular, \cite{Ogburn2012} argue that such assumptions ``will likely hold in most applications in epidemiology.'' In the context of our example, where the unmeasured confounder represents risky behavior, these assumptions hold if, for both smokers and non-smokers, and across all levels of observed confounders $X$, a higher level of risky behavior increases the average level of cadmium in the blood. Similarly, for all levels of $X$ and all combinations of hard drug use, a higher level of risky behavior must also increase the probability of being a smoker.

In addition to monotonicity assumptions, we impose an assumption on the effect of $U$ on $Z_1$ and $Z_2$:

\begin{assumption}[$U$ influences $Z_2$ more than $Z_1$]
\label{assumption: z_2_influenced_more}
In the case where the direction of monotonicity of Assumption \ref{assumption: treatment_mono} is non-decreasing, $ \mathbb{P}(Z_1 = 1, Z_2 = 0 \mid U= u, X= x)$ is non-increasing in $u$, $\mathbb{P}(Z_1 = 0, Z_2 = 1 \mid U = u, X = x)$ is non-decreasing in $u$ for any fixed $x$.  In the case where the direction of monotonicity of Assumption \ref{assumption: treatment_mono} is non-increasing, $\mathbb{P}(Z_1 = 1, Z_2 = 0 \mid U = u, X = x)$ is non-decreasing in $u$, $\mathbb{P}(Z_1 = 0, Z_2 = 1 \mid U = u, X = x)$ is non-increasing in $u$ for any fixed $x$. 
\end{assumption}
The assumption essentially requires that $U$ has the same sign relationship with $Z_1$ as with $Z_2$ for all $x$, and when considering the mixed joint probabilities, the relationship between $u$ and $\mathbb{P}(Z_1 = z_1, Z_2 = z_2 \mid U = u, X = x)$ inherits the same monotonic relationship as $\mathbb{P}(Z_2 = z_2 \mid U = u, X = x)$. This would require that $U$ has a larger influence on $Z_2$ than on $Z_1$. In the context of the data application, Assumption \ref{assumption: z_2_influenced_more} would be satisfied if the probability of being a smoker and not a hard drug user decreases as risky behavior increases at all levels of $X$, and that the probability of being a non-smoker and a hard drug user increases as risky behavior increases at all levels of $X$. The plausibility of this assumption can be generally assessed by fitting a two-parameter logistic model from item response theory (IRT) \citep{hambleton2013item} where the item responses are the two treatments $Z_1$ and $Z_2$:

\begin{align*}
    \mathbb{P} (Z_1 = 1 |U = u, X = x) = \frac{\exp\{\alpha_1 u   + \beta_1 x \} }{1 + \exp\{ \alpha_1 u + \beta_1 x \} },
\end{align*}
and
\begin{align*}
        \mathbb{P} (Z_2 = 1 | U = u, X = x) = \frac{\exp\{\alpha_2 u  + \beta_2 x \} }{1 + \exp\{ \alpha_2 u  + \beta_2 x \} },
\end{align*}
 where $u$ is the latent variable, $\alpha_1$ and $\alpha_2$ are discrimination parameters while $\beta_1$ and $\beta_2$ are coefficients of covariate $x$. In our context, the latent variable $u$ is the unmeasured confounding, and $x$ is age; $\alpha_1, \alpha_2$  determine the rate at which the probability of taking the corresponding treatment $Z_1$ or $Z_2$ changes as a function of the level of unmeasured confounding. By the local independence assumption from IRT, that is, item responses are independent conditional on  the latent variable, we have
\begin{equation} \label{equ:rasch model}
    \begin{split}
           &\qquad \mathbb{P}(Z_1 = a, Z_2 = b | U = u, X = x) \\
    &= \frac{\exp \{a  (\alpha_1 u + \beta_1 x) + b  (\alpha_2 u + \beta_2 x) \} }{1 + \exp\{\alpha_1 u + \beta_1 x \} + \exp\{\alpha_2 u + \beta_2 x \} + \exp\{\alpha_1 u + \beta_1 x + \alpha_2 u + \beta_2 x \}}.
    \end{split}
\end{equation}

Hence, the first part of the assumption \ref{assumption: z_2_influenced_more} holds if $\alpha_2/\alpha_1 \geq \exp(\alpha_1 u + \beta_1 x) + \exp(\alpha_1 u + \beta_1 x + \alpha_2 u + \beta_2 x)$ and $\alpha_2/\alpha_1 \geq \exp(\alpha_2 u + \beta_2 x) + \exp(\alpha_1 u + \beta_1 x + \alpha_2 u + \beta_2 x)$ for all $x$ and $u$ while the second part of the assumption 6 holds if $\alpha_2/\alpha_1 \leq \exp(\alpha_1 u + \beta_1 x) + \exp(\alpha_1 u + \beta_1 x + \alpha_2 u + \beta_2 x)$ and $\alpha_2/\alpha_1 \leq \exp(\alpha_2 u + \beta_2 x) + \exp(\alpha_1 u + \beta_1 x + \alpha_2 u + \beta_2 x)$ for all $x$ and $u$. Detailed calculations can be found in the supplementary material (Appendix II). With the assumptions listed above, the main result for the identification of treatment effects can be stated.

\begin{theorem}[Identification of Treatment Effects] \label{thm: bound treatment effects}

Under assumption \ref{assump: confoundness}, \ref{assump: ER}, \ref{assump: positivity}, \ref{assumption: outcome_mono}, \ref{assumption: treatment_mono}, and \ref{assumption: z_2_influenced_more}, $\tau (x) \in [\tau^- (x), \tau^+ (x)]$ for any fixed $x \in \mathbb{R}^l$.
Additionally, denote $\mu_1 = \mathbb{E}_X [\mu_1(X)], \mu_2 = \mathbb{E}_X [\mu_2(X)], \tau^+ = \max\{\mu_1,\mu_2\}$ and $\tau^- = \min\{\mu_1,\mu_2\}$, under the same set of assumptions, $\tau \in [\tau^-, \tau^+]$. 
\end{theorem}

The proof of Theorem \ref{thm: bound treatment effects} can be found in the supplementary material (Appendix I).

\subsection{Semi-parametric Estimation of Bounds and Inference on Treatment Effects}

 To estimate the bounds, we propose the following semi-parametric estimators motivated by Nadaraya-Watson kernel regression \citep{nadaraya1964estimating,hardle1990applied,hardle1991smoothing}:

\begin{equation} \label{def: bound tau hat (x)}
\begin{split}
       \hat \mu_1 (x) & =  \frac{\frac{1}{n_{10}h_1^{l}} \sum_{i \in \mathcal{I}_1  } K_{1} (\frac{x - X_i}{h_1}) Y_i }{\frac{1}{n_{10}h_1^{l}} \sum_{i \in \mathcal{I}_1  } K_{1} (\frac{x - X_i}{h_1})} - \frac{\frac{1}{n_{01}h_2^{l}} \sum_{i \in \mathcal{I}_2  } K_{2} (\frac{x - X_i}{h_2}) Y_i }{\frac{1}{n_{01}h_2^{l}} \sum_{i \in \mathcal{I}_2  } K_{2} (\frac{x - X_i}{h_2})},  \\
       \hat \mu_2 (x) &=  \frac{\frac{1}{n h_3^{l}}  \sum_{i=1}^n  \left( \frac{Z_1Y_i}{\mathbb{\hat P}_n (Z_{i1} = 1|Z_{i2}, X_i) } - \frac{(1-Z_1)Y_i}{1 - \mathbb{\hat P}_n (Z_{i1} = 1|Z_{i2}, X_i) }  \right) K_{3} (\frac{x-X_i}{h_3})    }{\frac{1}{n h_3^{l}} \sum_{i=1}^n K_{3} (\frac{x-X_i}{h_3})  }, \\
      \hat \mu_1 &= \frac{1}{n} \sum_{i=1}^n \hat \mu_1(X_i) \quad \text{and } \quad \hat \mu_2 = \frac{1}{n} \sum_{i=1}^n \hat \mu_2(X_i)
\end{split}
\end{equation}
where $K_1(\cdot), K_2(\cdot), K_3(\cdot)$ are three kernel functions with bandwidths $h_1$,$h_2, h_3$ and order $s_1, s_2, s_3$, respectively. Note that the Nadaraya-Watson estimates $\hat \mu_1 (x), \hat \mu_2 (x)$ are consistent estimators of $\mu_1 (x), \mu_2(x)$ if $h_1 \rightarrow 0, nh_1 \rightarrow \infty, h_2 \rightarrow 0, nh_2 \rightarrow \infty, h_3 \rightarrow 0, nh_3 \rightarrow \infty$, and $\sup_{X \in \mathcal{X}, Z_2 \in \{0,1\}} |\mathbb{P}(Z_1 = 1| Z_2, X) - \hat{\mathbb{P}}_n (Z_1 = 1| Z_2, X)  | = O_p (n^{-1/2}) $ hold.

Next, we state results for the variances of $\hat \mu_1 (x)$ and $\hat \mu_2 (x)$. Let $m_{10}(x) = \mathbb{E}[Y|Z_1 = 1, Z_2 = 0, X=x], m_{01}(x) = \mathbb{E}[Y|Z_1 = 0, Z_2 = 1,X=x]$ with corresponding estimators 

$\hat m_{10}(x) = \Big( \frac{1}{n_{10}h_1^{l}} \sum_{i \in \mathcal{I}_1  } K_{1} (\frac{x - X_i}{h_1}) Y_i \Big) \Big/ \left( \frac{1}{n_{10}h_1^{l}} \sum_{i \in \mathcal{I}_1  } K_{1} (\frac{x - X_i}{h_1}) \right)$, 

$\hat m_{01}(x) = \Big( \frac{1}{n_{01}h_1^{l}} \sum_{i \in \mathcal{I}_2  } K_{2} (\frac{x - X_i}{h_1}) Y_i  \Big) \Big/ \Big( \frac{1}{n_{01}h_1^{l}} \sum_{i \in \mathcal{I}_2  } K_{2} (\frac{x - X_i}{h_1}) \Big)$. 

Additionally, denote $f_{10}(x)$ the density of $X$ at $x$ conditional on $Z_1 = 1$ and $Z_2 = 0$, $f_{01}(x)$ the density of $X$ at $x$ conditional on $Z_1 = 0$ and $Z_2 = 1$, $f(x)$ the density of $X$ at $x$, $\sigma_1^2(x) = Var (\hat{\mu}_1 (x) )$, and $\sigma_2^2(x) = Var (\hat{\mu}_2 (x) )$. Under some mild conditions, 
\begin{align*}
    \sigma_1^2(x) = \frac{\sigma_{10}^2 (x) \|K_1 \|_2^2}{f_{10}(x)} + \frac{\sigma_{01}^2 (x) \|K_2 \|_2^2}{f_{01}(x)}, \quad
     \sigma_2^2(x) = \frac{\sigma_{\phi}^2 (x) \|K_3 \|_2^2}{f(x)}  
\end{align*}
where 
\begin{align*}
    &\|K \|_2^2 = \int K^2(u)du, \quad \sigma_\phi^2 (x) = \mathbb{E} \Big[ \left( \phi (X, Y, Z_1, Z_2) - \mu_2 (x) \right)^2 | X= x  ],  \\
    &\phi (x,y,z_1, z_2) = z_1 y / p(x,z_2) - (1-z_1)y/\left(1 - p(x,z_2)\right), \\
    &p(x, z_2) = \mathbb{P}(Z_1 = 1| Z_2 = z_2, X = x), \\
    &\sigma^2_{10}(x) = \mathbb{E} \Big[ \left(\hat m_{10} (X) - m_{10} (X) \right)^2 |X = x  \Big], \sigma^2_{01}(x) = \mathbb{E} \Big[ \left(\hat m_{01} (X) - m_{01} (X) \right)^2 |X = x  \Big].  
\end{align*}
Detailed justification can be found in the supplementary materials (Appendix V). To estimate the variances, we only need estimates of $\sigma_{10}^2(x), \sigma_{01}^2(x),\sigma_\phi^2(x), f_{10}(x), f_{01}(x)$ and $f(x)$. The estimates of $f_{10}(x), f_{01}(x), f(x)$ are the ordinary kernel density estimators $\hat f_{10}(x), \hat f_{01}(x), \hat f(x)$, and natural estimates of $\sigma_{10}^2(x), \sigma_{01}^2(x),$ and $\sigma_\phi^2(x)$ can defined as

\begin{align*}
    \hat \sigma_{10}^2 (x) &= n_{10}^{-1} \sum_{i \in \mathcal{I}_1} \frac{h_1^{-1}K_1(\frac{x-X_i}{h_1})  }{\hat f_{10}(x)} \left( Y_i - \frac{\frac{1}{n_{10}h_1^{l}} \sum_{j \in \mathcal{I}_1  } K_{1} (\frac{x - X_j}{h_1}) Y_j }{\frac{1}{n_{10}h_1^{l}} \sum_{j \in \mathcal{I}_1  } K_{1} (\frac{x - X_j}{h_1})} \right)^2 \\ 
    \hat \sigma_{01}^2 (x) & =  n_{01}^{-1} \sum_{i \in \mathcal{I}_2} \frac{h_2^{-1}K_2(\frac{x-X_i}{h_2})  }{\hat f_{01}(x)} \left( Y_i - \frac{\frac{1}{n_{01}h_2^{l}} \sum_{j \in \mathcal{I}_2  } K_{2} (\frac{x - X_j}{h_2}) Y_j }{\frac{1}{n_{01}h_2^{l}} \sum_{j \in \mathcal{I}_2  } K_{2} (\frac{x - X_j}{h_2})} \right)^2\\
    \hat \sigma_\phi^2 (x) &= n^{-2} \sum_{i=1}^n  \left( \hat \phi (X_i, Y_i, Z_{i1}, Z_{i2}) - \hat \mu_2 (x) \right)^2 \\
     \hat \phi (x, y, z_1, z_2) &=  \frac{z_1 y}{\mathbb{\hat P}_n (Z_{1} = 1|Z_{2} = z_2, X = x) } - \frac{(1-z_1)y}{1 - \mathbb{\hat P}_n (Z_{1} = 1|Z_{2} = z_2, X = x) },   
\end{align*}

where $\mathbb{\hat P}_n (Z_{i1} = 1|Z_{i2},X_i) $ is the estimated propensity by a parametric model (e.g. a logit or probit model estimated by maximum likelihood). To address the bias in nonparametric estimators, we employ undersmoothing by using kernel regression and selecting a bandwidth smaller than the optimal value that minimizes the asymptotic mean integrated squared error.

To construct a confidence interval for $\tau (x)$, we implement the Generalized Moment Selection (GMS) proposed by \citet{andrews2013inference,andrews2017commands} for testing conditional moment inequalities. We abuse the notation a bit here. Let $\hat \mu_1^{\bm W}(x) = \hat \mu_1(x)$ and $\hat \mu_2^{\bm W}(x) = \hat \mu_2(x)$ where $\bm W = (Y, Z_1, Z_2,\bm X)$. This is because we need $\bm W_i = (Y_i, Z_{1i}, Z_{2i}, \bm X_i)$ to obtain $\hat \mu_1(x)$ and $\hat \mu_2(x)$. Without loss of generality, assuming $\mu_1 (x) \leq \tau (x) \leq \mu_2 (x)$,  we will consider tests of the null hypothesis
\begin{align*}
    H_0: \mathbb{E} \left( \tau (x) - \hat \mu_1^{\bm W} (x)   \mid \bm X =  x \right) \geq 0, \mathbb{E}\left( \hat \mu_2^{\bm W} (x) - \tau (x) \mid \bm X =  x \right) \geq 0
\end{align*}
for $\tau (x) \in \mathbb{R}$ at fixed $x \in \mathbb{R}^l$ that control the probability of Type I error at level $\alpha$. GMS can be separated into two steps. In the first step, we compute the test statistic. In the second step, we compute the GMS critical value and construct the corresponding confidence interval. We list the algorithm 1 tailored for our problem below. 

\textbf{Algorithm 1. (Confidence Interval for CATE)}
\begin{itemize}
   \item[] Step 1: Compute Test Statistics. 
   \begin{itemize}
       \item[(a).] Transform each regressior to lie in $[0,1]$: Let $\bm X_i^\dagger$ denote the untransformed regressor vector, and 
       \begin{align*}
           \hat{\Sigma}_{\bm X,n} = n^{-1} \sum_{i=1}^n (\bm X_i^\dagger - \bar{\bm X}_n^\dagger   ) (\bm X_i^\dagger - \bar{\bm X}_n^\dagger   )^\top
       \end{align*}
       where $ \bar{\bm X}_n^\dagger = n^{-1} \sum_{i=1}^n \bm X_i^\dagger $. Let $\bm X_i = \Phi( \hat{\Sigma}_{\bm X,n}^{-1/2} (\bm X_i^\dagger - \bar{\bm X}_n^\dagger   )   )$ where $\Phi(\bm X) = (\Phi(x_1), \cdots, \Phi(x_l) )^\top$ for $x = (x_1, \cdots, x_l)$, and $\Phi(x_j)$ is the standard normal distribution for $x_j \in \mathbb{R}$.  
       \item[(b).] Specify the function $g$. 
       \begin{align*}
           g_{\bm a,r}(x) = 1 \{ x \in C_{\bm a,r}    \} \mathbf{1}_l
       \end{align*}
       where $C_{\bm a,r} = \times_{u=1}^l \left( (a_u - 1)/(2r), a_u/(2r)   \right]$, $\bm a = (a_1, \cdots,a_l), a_u \in \{1,2,\cdots,2r \}, \mathbf{1}_l = ( \underbrace{1,\cdots,1}_l), r = r_0, r_0 + 1, \cdots$. 
       \item[(c).] Specify the weight function. We take it to be uniform on $\bm a \in \{1, \cdots,2r \}^l$ given $r$, combined with $w(r) = (r^2 + 100)^{-1}$ for $r = 1, \cdots, r_{1,n}$. We choose $r_{1,n}$ be $4$ in the case study and simulations.
       \item[(d).] Compute the test statistic, which is defined by
     \begin{equation} \label{test stats}
       \begin{split}
                    \bar T_{n, r_{1,n}} (\tau(x)  ) = &\sum_{r=1}^{r_{1,n}} (r^2+100)^{-1} \\
                    & \times \sum_{ a\in \{1,\cdots,2r \}^l   } (2r)^{-l} S\left(n^{1/2} \bar{m}_n (\tau(x), g_{\bm a, r}   ), \bar{\Sigma}_n (\tau(x), g_{\bm a, r}  )   \right)
       \end{split}
     \end{equation}
     where 
     \begin{align*}
         S(\bm m, \Sigma) = [m_1/\sigma_1^2]_-^2 + [m_2/\sigma_2^2]_-^2
     \end{align*}
     where $m_j$ is the $j$-th element of the vector $\bm m$, $\sigma_j^2$ is the $j$-th diagonal element of the matrix $\Sigma$, and $[x]_- = -x$ if $x<0$ and $[x]_- = 0$ if $x \geq 0$,
     \begin{align*}
         &\bar m_n( \tau (x), g_{\bm a, r} ) = n^{-1} \sum_{i=1}^n m(\bm W_i, \tau (x), g_{\bm a, r} ), \\
         &m(\bm W_i, \tau (x), g_{\bm a, r} ) = \begin{bmatrix}
            \left(\tau (x) - \hat \mu_1^{\bm W} (x) \right) g_{\bm a, r} (x)   \\
           \left(\hat \mu_2^{\bm W} (x) - \tau (x) \right) g_{\bm a, r} (x)  
         \end{bmatrix}, \\
         &\bar{\Sigma}_n (\tau(x), g_{\bm a, r}  ) = \hat{\Sigma}_n (\tau(x), g_{\bm a, r} ) + \epsilon \text{Diag}(\hat{\Sigma}_n (\tau(x), g_{\bm a, r} ) ), \\
         & \hat{\Sigma}_n (\tau(x), g_{\bm a, r} ) = n^{-1} \sum_{i=1}^n \left( m(\bm W_i, \tau (x), g_{\bm a, r} ) - \bar m_n( \tau (x), g_{\bm a, r} )   \right)  \left( m(\bm W_i, \tau (x), g_{\bm a, r} ) - \bar m_n( \tau (x), g_{\bm a, r} ) \right)^\top,
     \end{align*}
     with $\epsilon = 0.05$. 
   \end{itemize}
   \item[] Step 2: Compute the GMS critical value based on the asymptotic distribution.
   \begin{itemize}
       \item[(a).] Compute $\psi_n (\tau(x), g_{\bm a, r}   ) $ where 
       \begin{align*}
           &\psi_n (\tau(x), g_{\bm a, r}   ) = \left( \psi_{n,1} (\tau(x), g_{\bm a, r}), \psi_{n,2} (\tau(x), g_{\bm a, r}    )     \right), \\
           &\psi_{n,j} (\tau(x), g_{\bm a, r} ) = B_n 1 \{ \xi_{n,j} (\tau(x), g_{\bm a, r}) > 1 \}, j = 1, 2, \\
           &\xi_{n} (\tau(x), g_{\bm a, r}) = \kappa_n^{-1} n^{1/2} \bar{D}_n^{-1/2} (\tau (x), g_{\bm a, r} ) \bar m_n( \tau (x), g_{\bm a, r} ), \\
           &\bar{D}_n^{-1/2}  (\tau (x), g_{\bm a, r} ) = \text{Diag}(\bar{\Sigma}_n (\tau(x), g_{\bm a, r} )  ),
       \end{align*}
       where we take $\kappa_n = (0.3 \ln(n))^{1/2}, B_n = \left(0.4 \ln(n) / \ln (\ln (n))  \right)^{1/2}$, and $\xi_{n,j} (\tau(x), g_{\bm a, r})$ is the $j$-th component of $\xi_{n} (\tau(x), g_{\bm a, r})$.
       \item[(b).] Simulate a $(2 N_g) \times N_{rep} $ matrix $Z$ of standard normal random variable where $N_g = \sum_{r=1}^{r_{1,n}}(2r)^l$ and we set $N_{rep} = 500$. 
       \item[(c).] $(2 N_g) \times (2 N_g)$ covariance $\hat{h}_{2,n,mat}(\tau (x))$ whose elements are 
       \begin{align*}
           \hat{h}_{2,n} (\tau(x), g_{\bm a, r}, g_{\bm a, r}^*  ) &= \hat{D}_n^{-1/2}(\tau(x)) \hat{\Sigma}_n(\tau(x), g_{\bm a,r}, g_{\bm a,r}^*) \hat{D}_n^{-1/2} (\tau(x)) \\
           \hat{\Sigma}_n(\tau(x), g_{\bm a,r}, g_{\bm a,r}^*) &=  n^{-1} \sum_{i=1}^n \left( m(\bm W_i, \tau (x), g_{\bm a, r} ) - \bar m_n( \tau (x), g_{\bm a, r} )   \right) \\
           &\times \left( m(\bm W_i, \tau (x), g_{\bm a, r}^* ) - \bar m_n( \tau (x), g_{\bm a, r}^* ) \right)^\top \\
           \hat{D}_n^{-1/2}(\tau(x)) &= \text{Diag}(\hat{\Sigma}_n (\tau(x), \mathbf{1}_2, \mathbf{1}_2 ) )
       \end{align*}
       \item[(d).] Compute the $(2 N_g) \times N_{rep}$ matrix $\hat{\nu}_n (\tau (x) ) = \hat{h}_{2,n,mat}^{1/2}(\tau (x)) Z$. Let $\hat{\nu}_{n,j} (\tau (x) )$ denote the $k$-dimensional subvector of the matrix $\hat{\nu}_n (\tau (x) )$ that corresponds to the $k$ rows indexed by $g_{\bm a, r}$ and column $j$ for $j = 1, \cdots, N_{rep}$. 
       \item[(e).] For $j = 1, \cdots, N_{rep}$, compute the test statistic $\bar T_{n,r_{1,n},j}(\tau (x))$ just as $\bar T_{n,r_{1,n}}(\tau (x))$ is computed in step 1 (d) but with $n^{1/2} \bar m_n (\tau(x), g_{\bm a, r} )$ replaced by $\hat{\nu}_{n,j} (\tau (x) ) + \psi_n (\tau(x), g_{\bm a, r}   )$. 
       \item[(f).]  Take the critical value to be $1 - \alpha + \eta$ sample quantiles of $\{T_{n.r_{1,n},j} (\tau (x)): j = 1, \cdots, N_{rep} \}$ (say this is $T_{n.r_{1,n}}^* (\tau (x))$), and we set $\eta = 10^{-6}$. The $(1 - \alpha) \times 100\%$ confidence set is 
   \begin{align*}
       c_n(x) = \{\tau(x) \in \mathbb{R}:  \bar T_{n, r_{1,n}} (\tau(x)  ) < T_{n.r_{1,n}}^* (\tau (x))   \}
   \end{align*}
   where $x \in \mathbb{R}^l$. 
   \end{itemize}
   
\end{itemize}

We then have a corollary which demonstrates the correct coverage of the constructed confidence interval for CATE. 

\begin{corollary}[Coverage of the CATE]
    $c_n (x)$ obtained from step 2(f) in Algorithm 1 satisfies 
   \begin{align*}
       \lim_{n \rightarrow \infty} \inf \inf_{(\tau(x), F) \in \mathcal{F}, \hat h_{2,n,mat}(\tau(x)) \in \mathcal{H}_{2,cpt} } \inf_{\tau \in \mathbb{R}} \mathbb{P}_F \left(\tau(x) \in c_n (x ) \right) \geq 1 - \alpha 
   \end{align*}
   for any $x \in \mathbb{R}^l$ where $\mathcal{H}_{2,cpt}$ is compact sets of covariances kernels and $\hat h_{2,n,mat}(\tau(x))$ is defined in step 2(c) in Algorithm 1.  
\end{corollary}
This corollary shows that GMS tailed to our problem has correct uniform asymptotic size over compact sets of covariance kernels, and it is a direct implication of the Theorem 2 in \citet{andrews2013inference}. As for the inference for the ATE, detailed discussion can be found in the supplementary material (Appendix IV).

\section{Simulation} \label{sec:simulation}

\subsection{Goal and Design}

The goal of this simulation is to investigate the coverage probability of our proposed estimators for bounds of the ATE and CATE in the presence of unmeasured confounders under different data generating processes. The coverage probability is the probability that the  true ATE or CATE lies within the estimated bounds or corresponding confidence intervals. Consider the following data generating process. We observe $n$ independent and identically distributed $\{Y_i, Z_{1i}, Z_{2i}, \mathbf{X}_i \}$ where  $Z_{i2} \sim Ber(1/2), C_i \sim \text{Multinomial}\left((0,1,2), (p,(1-p)/2,(1-p)/2 ) \right)$,

\begin{align*}
      Z_{i1} &=  \begin{cases} 
      Z_{i2} & C_i = 0 \\
      1 & C_i = 1 \\
      0 & C_i = 2 
   \end{cases} \\
    U_i &= \kappa  + \delta Z_{i2} +  \nu_i
\end{align*}
with two different outcome models, namely, the homogeneous effect model: $    Y_i = \theta Z_{i1} + \mathbf{\gamma}^{\top} \mathbf{X}_i  + U_i + \epsilon_i$, and the heterogeneous effect model: $    Y_i = \left(\alpha \sin( X_{i1}/2) + \beta \right) Z_{i1} + \mathbf{\gamma}^{\top} \mathbf{X}_i  + U_i + \epsilon_i$,
$\nu_i \sim \mathcal{N}(0, \sigma^2), \epsilon_i \sim \mathcal{N} (0,  \omega^2)$, $\mathbf{X}_i | C_i = 0 \sim \mathcal{N} (\mathbf{0}_d, \mathbf{I}_{d \times d} ) $,  $\mathbf{X}_i | C_i = 1 \sim \mathcal{N} (\mathbf{0.25}_d, \mathbf{I}_{d \times d} ) $, and $\mathbf{X}_i | C_i = 2 \sim \mathcal{N} (\mathbf{0.50}_d, \mathbf{I}_{d \times d} ) $. We summarize the factors of the simulation as follows. 

\begin{itemize}
    \item[\textbf{Factor 1}] Choice of $\delta$: 1.) $\delta = 0$, and $\delta = 1$. This aims to examine the coverage probabilities when $Z_2$ is not correlated with the unmeasured confounder $U$ ($\delta =0$) and is correlated with the unmeasured confounder $U$ ($\delta =1$);
    \item[\textbf{Factor 2}] Number of observations: 1.) $n=1000$,  2.) $n=2000$,  and 3.) $n=5000$;
    \item[\textbf{Factor 3}] Dimension of observed covariates: 1.) $d = 5$,  2.) $d = 10$, and 3.) $d = 20$;
    \item[\textbf{Factor 4}] Outcome models: 1.) homogeneous effect model, and 2.) heterogeneous effect model;
    \item[\textbf{Factor 5}] Proportion of observations with $C_i = 0$: 1.) $p = 0.6$, 2.) $p = 0.7$, and 3.) $p = 0.8$. This aims to examine the coverage probabilities under different correlations between $Z_{i1}$ and $Z_{i2}$.
\end{itemize}

The rest of the parameters are as follows. $\kappa  =  \sigma^2 = \omega^2 = 1, \mathbf{\gamma} = \mathbf{1}_d, \theta = 3, \alpha = 2$, and $\beta = 2.5$. We will use the logistic model to estimate the propensity, the Gaussian kernel to estimate the CATE conditioned on the first component of $\mathbf{X}$ (i.e. $X_{1}$), and Asymptotic Mean Integrated Squared Error (AMISE) to obtain the optimal bandwidth. As suggested by \citet{abrevaya2015estimating}, to reduce bias results, for fully non-parametric estimator, we undersmooth the conditional outcome (i.e., $h_1, h_2$), and for the semi-parametric estimator, we undersmooth with $h_3$. If we use fully non-parametric estimator for CATE (which is not discussed in this paper), we should avoid undersmoothing for the propensity score. We set $x = X_{1} = 1$ for examining the coverage probability of the CATE. We will run $500$ simulations under different combinations of factors and report the estimated coverage probability. And we set the number of bootstrap samplers to be $1000$. 

\subsection{Results}

\begin{table}[]
\centering  
\begin{tabular}{  cccccccc  p{5cm} }
\hline
 &  Outcome model & $p = 0.7$ & $p = 0.8$ & $p = 0.9$ & $p = 0.7$ & $p = 0.8$ & $p = 0.9$ \\ 
\hline
   \multicolumn{7}{c}{\qquad \qquad \qquad \qquad \qquad \qquad \qquad $n = 1000,\delta = 0$ \quad \quad  \qquad \qquad  $n = 1000,\delta = 1$}   \\ 
  \multirow{2}{4em}{$d = 5$} & \emph{Homogeneous}  & 0.996 & 1.000 & 1.000 & 0.998 & 1.000 & 1.000 \\ 
   & \emph{Heterogeneous}  & 0.992 & 0.982 & 1.000 & 1.000 & 1.000 & 1.000 \\ 
  \multirow{2}{4em}{$d = 10$} & \emph{Homogeneous}  & 1.000 & 1.000 & 1.000 & 1.000 & 1.000 & 1.000 \\ 
   & \emph{Heterogeneous}  & 1.000 & 1.000 & 0.998 & 1.000 & 1.000 & 0.996 \\ 
 \multirow{2}{4em}{$d = 20$}  & \emph{Homogeneous}  & 1.000 & 1.000 & 1.000 & 1.000 & 1.000 & 1.000  \\ 
   & \emph{Heterogeneous} & 1.000 & 1.000 & 1.000 & 1.000 & 1.000 & 1.000  \\ 
   \hline
   \multicolumn{7}{c}{\qquad \qquad \qquad \qquad \qquad \qquad \qquad $n = 2000,\delta = 0$ \quad \quad  \qquad \qquad  $n = 2000,\delta = 1$}   \\ 
   
\multirow{2}{4em}{$d = 5$} & \emph{Homogeneous} & 0.990 & 0.990 & 0.996 & 1.000 & 1.000 & 1.000\\ 
   & \emph{Heterogeneous} & 0.986 & 0.986 & 0.990 & 1.000 & 1.000 & 1.000 \\ 
  \multirow{2}{4em}{$d = 10$} & \emph{Homogeneous} & 0.998 & 0.998 & 1.000 & 1.000 & 0.998 & 1.000 \\ 
   & \emph{Heterogeneous} & 1.000 & 0.998 & 1.000 & 1.000 & 1.000 & 1.000 \\ 
  \multirow{2}{4em}{$d = 20$} & \emph{Homogeneous} & 1.000 & 1.000 & 1.000 & 1.000 & 1.000 & 1.000 \\ 
   & \emph{Heterogeneous} & 1.000 & 1.000 & 1.000 & 1.000 & 1.000 & 1.000 \\
   \hline
   \multicolumn{7}{c}{\qquad \qquad \qquad \qquad \qquad \qquad \qquad $n = 5000,\delta = 0$ \quad \quad  \qquad \qquad  $n = 5000,\delta = 1$}   \\ 
  \multirow{2}{4em}{$d = 5$}  & \emph{Homogeneous}  & 0.994 & 0.994 & 0.990 & 1.000 & 1.000 & 0.998 \\ 
    & \emph{Heterogeneous}  & 0.994 & 0.992 & 0.978 & 1.000 & 1.000 & 1.000 \\ 
  \multirow{2}{4em}{$d = 10$}  & \emph{Homogeneous}  & 1.000 & 0.998 & 0.996 & 1.000 & 1.000 & 1.000  \\ 
    & \emph{Heterogeneous} & 1.000 & 0.998 & 0.998 & 1.000 & 1.000 & 1.000\\ 
  \multirow{2}{4em}{$d = 20$}  & \emph{Homogeneous}& 1.000 & 1.000 & 1.000 & 1.000 & 1.000 & 1.000 \\ 
    & \emph{Heterogeneous} & 1.000 & 1.000 & 1.000 & 1.000 & 1.000 & 1.000 \\ 
    \hline

\end{tabular}
\caption{Reported coverage probability of the $95 \%$ confidence interval of bounds for the ATE across different combinations of factors.}
\label{tab: CI.ATE}
\end{table}

\begin{table}[] 
\centering
\begin{tabular}{  cccccccc  p{5cm} }
\hline
 &  Outcome model & $p = 0.7$ & $p = 0.8$ & $p = 0.9$ & $p = 0.7$ & $p = 0.8$ & $p = 0.9$ \\ 
\hline
   \multicolumn{7}{c}{\qquad \qquad \qquad \qquad \qquad \qquad \qquad $n = 1000,\delta = 0$ \quad \quad  \qquad \qquad  $n = 1000,\delta = 1$}   \\ 
  \multirow{2}{4em}{$d = 5$} & \emph{Homogeneous}  & 1.000 & 0.996 & 0.960 & 0.994 & 0.996 & 0.970 \\
   & \emph{Heterogeneous}  & 1.000 & 1.000 & 0.962 & 1.000 & 0.996 & 0.966 \\ 
  \multirow{2}{4em}{$d = 10$} &\emph{Homogeneous}  & 1.000 & 1.000 & 0.970 & 0.998 & 0.994 & 0.980 \\ 
   & \emph{Heterogeneous}   & 1.000 & 0.998 & 0.980 & 0.998 & 0.998 & 0.976 \\ 
 \multirow{2}{4em}{$d = 20$}  & \emph{Homogeneous}  & 1.000 & 0.998 & 0.980 & 1.000 & 0.996 & 0.988 \\ 
   & \emph{Heterogeneous} & 1.000 & 0.992 & 0.986 & 1.000 & 0.998 & 0.972 \\ 
   \hline
   \multicolumn{7}{c}{\qquad \qquad \qquad \qquad \qquad \qquad \qquad $n = 2000,\delta = 0$ \quad \quad  \qquad \qquad  $n = 2000,\delta = 1$}   \\ 
\multirow{2}{4em}{$d = 5$} & \emph{Homogeneous} & 1.000 & 1.000 & 0.998 & 0.996 & 1.000 & 0.984 \\ 
   & \emph{Heterogeneous} & 1.000 & 1.000 & 0.988 & 0.998 & 0.996 & 0.982 \\ 
  \multirow{2}{4em}{$d = 10$} & \emph{Homogeneous} & 1.000 & 0.998 & 0.992 & 1.000 & 1.000 & 0.994 \\ 
   & \emph{Heterogeneous} & 1.000 & 1.000 & 0.998 & 1.000 & 1.000 & 0.992 \\ 
  \multirow{2}{4em}{$d = 20$} & \emph{Homogeneous} & 1.000 & 1.000 & 0.992 & 1.000 & 1.000 & 1.000 \\ 
   & \emph{Heterogeneous} & 1.000 & 1.000 & 0.998 & 1.000 & 1.000 & 0.996 \\ 
   \hline
   \multicolumn{7}{c}{\qquad \qquad \qquad \qquad \qquad \qquad \qquad $n = 5000,\delta = 0$ \quad \quad  \qquad \qquad  $n = 5000,\delta = 1$}   \\ 
  \multirow{2}{4em}{$d = 5$}  & \emph{Homogeneous} & 1.000 & 1.000 & 1.000 & 1.000 & 1.000 & 1.000 \\ 
    & \emph{Heterogeneous}  & 1.000 & 1.000 & 1.000 & 1.000 & 1.000 & 0.998 \\ 
  \multirow{2}{4em}{$d = 10$}  & \emph{Homogeneous}  & 1.000 & 1.000 & 1.000 & 1.000 & 1.000 & 0.998 \\ 
    & \emph{Heterogeneous} & 1.000 & 1.000 & 1.000 & 1.000 & 1.000 & 1.000 \\ 
  \multirow{2}{4em}{$d = 20$}  & \emph{Homogeneous}& 1.000 & 1.000 & 1.000 & 1.000 & 1.000 & 1.000 \\ 
    & \emph{Heterogeneous} & 1.000 & 1.000 & 1.000 & 1.000 & 1.000 & 1.000 \\ 
    \hline
\end{tabular}
\caption{Reported coverage probability of the $95 \%$ confidence interval of bounds for the CATE at $X_{1} = 1$ across different combinations of factors.}
\label{tab: CI.CATE}
\end{table}

Table \ref{tab: CI.ATE} and Table \ref{tab: CI.CATE} summarize the coverage probability of the $95 \%$ confidence interval of bounds for both the ATE and CATE under different data generating processes. Additional simulation results of the coverage probability and relative length of estimated bounds for both the ATE and CATE can be found in Table \ref{tab: ATE}, Table \ref{tab: CATE}, Table \ref{tab: relative length CI ATE}, and Table \ref{tab: relative length CI CATE} in the Supplementary material (Appendix VI).

We identify several consistent patterns from the simulation results. First, we find that whether $Z_2$ is correlated with the unmeasured confounder $U$ or not, our proposed methodology does well in the sense that $95 \%$ confidence intervals of bounds for both the ATE and CATE have a coverage probability close to 1 for any combinations of factors. Note that since we are estimating bounds and forming a confidence interval for the true effects using the two-stage approach, if the true causal effect lies within the interior of the probability limits of the bounds, then we would expect coverage probabilities greater than $95\%$ for confidence intervals for the true effect \citep{jiang2018using}. Furthermore, for the coverage probability of estimated bounds, the proposed methodology does substantially better asymptotically in the sense that the coverage probability gets closer to 1 as the number of observations increases. %which matches the developed asymptotic normality of proposed estimators. Note that orders of consistency of estimators of bounds for the ATE and CATE are different ($\sqrt{n}$ v.s.$\sqrt{n h}$, where $h$ is the bandwidth), which is also suggested by simulation results in Table \ref{tab: ATE} and Table \ref{tab: CATE}. 

Note that we can adapt any kernels including Gaussian, Cosine, Sigmoid, etc as well as different kernel estimators such as Priestley–Chao \citep{priestley1972non}, and Gasser–Müller \citep{gasser1979kernel} in our framework which makes our approach flexible for implementing different model classes. In addition, parametric models for both conditional outcomes and treatment assignment could also be implemented.

\section{Case Study: the Effect of Smoking On the Blood Level of Cadmium } \label{sec:case studies}

The National Health and Nutrition Examination Survey is a set of surveys originally created to assess the health and nutritional status of adults and children in the United States. The NHANES includes demographic, socioeconomic, dietary, and health-related questions. 

Following \citet{rosenbaum2013using}, we use NHANES to examine the effect of smoking ($Z_1$) on blood levels of cadmium. Cadmium is a toxic metal with various detrimental health effects \citep{jarup2009current} so increases in cadmium would be concerning. We consider whether the person past used hard drugs,  in particular, the answer to the NHANES question ``Have you ever used cocaine, crack cocaine, heroin, or methamphetamine?" ($Z_2$) as the second treatment.  We exclude people who are currently using hard drugs from the study so $Z_2$ indicates past hard drug use but not current hard drug use. Compared to \citet{rosenbaum2013using} who used data from the 2005-2006 NHANES, we used updated data from the 2017-2018 NHANES.  A current smoker is defined as smoking at least 10 cigarettes every day for the past 30 days, while a non-current smoker is defined as not smoking cigarettes in the prior 30 days. We control for the following covariates: age, gender, ethnicity, education level, and the ratio of family income to poverty. We exclude individuals with missing data. Detailed information on the distribution of the covariates in smokers versus non-smokers can be found in Table \ref{tab:demographics NHANES} in Appendix VI in the Supplementary materials. We consider estimating bounds on both the ATE and CATE conditional on different genders and education levels.

%\begin{table}[]
%\centering
%\begin{tabular}{  cccc  p{5cm} }

 %& \emph{Covariate}   & \emph{Estimated bounds}   & \emph{95\% C.I bounds}    \\ 

%\multirow{1}{4em}{\bf \emph{ATE}} &  &  $[7.10, 7.83]$   & $[4.31, 9.88]$  \\ 

  % \multirow{6}{4em}{\bf \emph{CATE}}   & \emph{Race(white)} & $[8.57,9.90]$ & $[6.71, 13.09]$   \\ 
  %& \emph{Race(black)} & $[3.94, 4.38]$ & $[-1.33,9.21]$   \\ 
   % & \emph{Gender(male)}    & \centering $[8.41,9.89]$ & $[4.70,12.12]$   \\ 
   %& \emph{Gender(female)}  &  $[6.08, 6.22]$ & $[3.17,9.26]$ \\
    %  & \emph{Education(college)} & $[6.33, 6.46]$ & $[2.31, 10.35]$   \\ 
   %& \emph{Education(high school)} & $[7.47,8.27]$ & $[4.30,12.23]$   \\ 
%\end{tabular}
%\caption{Bounds of average and heterogeneous effects under exposure to smoking in the blood level of cadmium (nmol/L) by NHANES 2017-2018.}
%\label{tab:NHANES results}
%\end{table}

\begin{table}[]
\centering
\begin{tabular}{  cccc  p{5cm} }
\hline
 & \emph{Covariate}   & \emph{Estimated bounds}   & \emph{95\% C.I bounds}    \\ 
\hline
\multirow{1}{4em}{\bf \emph{ATE}} &  &  $[10.24, 10.70]$   & $[8.13, 12.57]$ \\
\hline
   \multirow{4}{4em}{\bf \emph{CATE}}  & \emph{Gender(male)}    & \centering $[9.09,11.37]$ & $[8.04,12.00]$   \\ 
   & \emph{Gender(female)}  &  $[10.16, 11.25]$ & $[6.47,12.65]$ \\
      & \emph{Education(college)} & $[9.21, 10.45]$ & $[7.57, 11.32]$   \\ 
   & \emph{Education(high school)} & $[11.20,12.53]$ & $[9.21,13.53]$  \\
   \hline
\end{tabular}
\caption{Bounds of average and heterogeneous effects under exposure to smoking in the blood level of cadmium (nmol/L) by NHANES 2017-2018. We set the number of bootstrap samplers to be $1000$.}
\label{tab:NHANES results}
\end{table}

Since both smoking status and past usage of hard drugs could be correlated with an unmeasured confounder such as other unhealthy habits, past usage of hard drugs does not seem like a plausible IV. Instead, we consider the possibility of using past usage of hard drugs as the second treatment in a differential effects analysis. It is plausible that past usage of hard drugs satisfies the exclusion restriction because having tried drugs in the past but not continuing to use them into the present is unlikely to affect the present level of cadmium. To examine the plausibility of past usage of hard drugs satisfying the monotonicity assumption, we consider a two-parameter logistic model for health risky behaviors. Following the study from \citet{stimpson2007neighborhood}, we consider health risky behaviors of unhealthy fat intake and alcohol abuse in addition to smoking and past usage of hard drugs. Our assumptions are plausible because 1.) higher level of risky behavior increases the average level of cadimum and the probability of being a smoker, and 2.) the probability of being a non-smoker and a hard drug user increases as risky behavior increases at all level of $X$ (see Appendix II for details). Since the assumptions for our proposed method seem plausible,  we implement our proposed method for the identification of the ATE and CATE. Estimates of the bounds and associated $95\%$ confidence intervals for the ATE and CATE of the effect of smoking status on the blood level of lead with past usage of hard  drugs are presented in Table \ref{tab:NHANES results}. Our results are easy to interpret. For example, $[10.24, 10.70]$ (in the first row of Table \ref{tab:NHANES results}) for the effect of smoking status on the blood level of lead indicates that compared with no smoking cigarettes in the prior 30 days, smoking at least 10 cigarettes every day for the past 30 days will increase the blood level of cadmium by 10.24 nmol/L to 10.70 nmol/L. The interval $[10.24, 10.70]$ is informative about the sign of the treatment effect -- it contains only positive numbers so provides evidence that smoking causes an increase in cadmium. Compared to partial identification approaches using IVs, our method does not rely on the unconfoundedness of instruments. When implementing an IV-based partial identification approach without the assumption \citep{swanson2018partial}, we obtain the interval $[-52.24,68.84]$, which is excessively wide and thus uninformative. Additionally, for the bounds on the CATE, for instance,  $[9.09, 11.37]$ for the effect of smoking status on the blood level of cadmium conditional on males demonstrate that compared with no smoking cigarettes in the prior 30 days, smoking at least 10 cigarettes every day for the past 30 days will increase the blood level of cadmium by 9.09 nmol/L to 11.37 nmol/L within the male group.

%The bounds of the CATE conditional on age and ratio of family income to poverty are shown in Figure \ref{fig:NHANES} similar to Section \ref{sec:simulation}. Additional results of blood levels of lead can be found in Table \ref{tab:NHANES lead results} and Figure \ref{fig:NHANES lead} with a similar interpretation. For example, $[0.006, 0.013]$ (in the first row of Table \ref{tab:NHANES lead results}) for the effect of smoking status on the blood level of lead indicates that compared with no smoking cigarettes in the prior 30 days, smoking at least 10 cigarettes every day for the past 30 days will increase the blood level of lead by 0.006 umol/L to 0.013 umol/L.

\section{Discussion: Summary and Extension} \label{sec:discussion}

In this paper, we develop a differential effects approach to obtaining a flexible semi-parametric framework for estimating bounds for the ATE and CATE that can be plausible in a variety of settings.  Our bounds do not rely on very strong assumptions such as the no unmeasured confounding and the plausibility of key assumptions that the bounds rely on can be assessed under working models such as a 2-parameter logistic model.  The bounds can be informative such as the bounds we found for the effect of smoking on blood cadmium suggested that smoking increases blood cadmium. Our framework could be used to make inferences about individual treatment rules and personalized treatment since the optimal treatment rule can be estimated via partial identification of treatment effects \citep{pu2021estimating}. This could be a potential extension of our proposed framework.  %Unlike partial identification via the local average treatment effect framework, our methodology does not rely on instrumental variable unconfoundedness. 

\section*{Acknowledgment}
This research was partially supported by NIH RF1AG063481 grant.

\section*{Supplementary Material}

The supplementary materials are divided into six parts. In Appendix I, II, III, IV, and V. In Appendix I, we provide a proof of Theorem \ref{thm: bound treatment effects} and in Appendix II, III, IV and V, we provide theoretical discussion of the monotonicity assumption, alternative bounding analysis, inference for the ATE and asymptotic properties of of $\hat \mu_1(x), \hat \mu_2 (x)$, respectively. Additional simulation results and case study results can be found in Appendix VI.

\bibliographystyle{apalike}
\bibliography{paper-ref}

\newpage

\begin{center}
    \Large Supplementary Materials for ``A Differential Effect Approach to Partial Identification of Treatment Effects"
\end{center}

\section*{Appendix I: Justification of Bounds on Treatment Effects}
To prove Theorem \ref{thm: bound treatment effects}, we need the following lemmas. 

\begin{lemma}[\cite{Esary1967}]
\label{lem: covariance_inequality}
Let $f(\cdot)$ and $g(\cdot)$ be functions with $K$ real-valued arguments, which are both non-decreasing in each of their arguments. If $U=\left(U_1, \ldots, U_K\right)$ is a multivariate random variable with $K$ mutually independent components, then $\operatorname{cov}\{f(U), g(U)\} \geq 0$.
\end{lemma}
Recall that the conditional counterfactual mean in arm $a$ is given by 
\begin{equation*}
    E[Y(a) \mid X = x] = \int E[Y \mid A = a, U = u, X = x] f(u \mid x) du,
\end{equation*}
where $f(u|x)$ is the conditional density function of $u$ given $x$. 
\begin{lemma}
\label{lem: outcome regression equivalence}
Under Assumptions \ref{assump: confoundness} and \ref{assump: ER},
\begin{equation*}
E[Y \mid Z_1 = z_1, Z_2 = z_2, U = u, X = x] = E[Y \mid Z_1 = z_1,  U = u, X = x],
\end{equation*}
for all $u, x$ and $z_1,z_2 \in \{0,1\}$.
\end{lemma}
\begin{proof}
By Assumption \ref{assump: confoundness},
\begin{align*}
E[Y(z_1) \mid  Z_2 = 1, U = u, X = x] &= E[Y(z_1) \mid Z_1 = z_1, Z_2 = 1, U = u, X = x] \\ &= E[Y \mid Z_1 = z_1, Z_2 = 1, U = u, X = x].
\end{align*}
Similarly, 
\begin{align*}
E[Y(z_1) \mid  Z_2 = 0, U = u, X = x] &= E[Y(z_1) \mid Z_1 = z_1, Z_2 = 0, U = u, X = x] \\ &= E[Y \mid Z_1 = z_1, Z_2 = 0, U = u, X = x].
\end{align*}
Then by Assumption \ref{assump: ER}, the left hand side quantities are equal, so $$E[Y \mid Z_1 = z_1, Z_2 = 1, U = u, X = x] = E[Y \mid Z_1 = z_1, Z_2 = 0, U = u, X = x],$$ which implies the result.
\end{proof}
\begin{lemma}
\label{lem: adjusted_for_z_2}
Under assumptions \ref{assumption: outcome_mono}, \ref{assumption: treatment_mono}, if the signs of the monotonicity match (i.e. both non-decreasing or non-increasing), 
\begin{equation*}
\begin{aligned}
    \sum_{z_2 \in \{0,1\}} E[Y \mid Z_1 = 1, Z_2 = z_2, X = x] P(Z_2 = z_2 \mid X = x) \geq E[Y(1) \mid X = x] \text{ and }  \\  \sum_{z_2 \in \{0,1\}} E[Y \mid Z_1 = 0, Z_2 = z_2, X = x] P(Z_2 = z_2 \mid X = x)  \leq E[Y(0) \mid X =x]. 
\end{aligned}
\end{equation*}
If the signs of monotonicity differ, 
\begin{equation*}
\begin{aligned}
    \sum_{z_2 \in \{0,1\}} E[Y \mid Z_1 = 1, Z_2 = z_2, X = x] P(Z_2 = z_2 \mid X = x) \leq E[Y(1) \mid X = x] \text{ and }  \\  \sum_{z_2 \in \{0,1\}} E[Y \mid Z_1 = 0, Z_2 = z_2, X = x] P(Z_2 = z_2 \mid X = x)  \geq E[Y(0) \mid X =x]. 
\end{aligned}
\end{equation*}
\end{lemma}
\begin{proof}
The proof uses similar ideas as in \cite{VanderWeele2008}. When the signs of monotonicity are the same, the conditional mean of $Y$ for $Z_1 = 1$ adjusted for the second treatment $Z_2$ is 
\begin{equation*}
\begin{aligned}
     \int & E[Y \mid Z_1 = 1, Z_2 = z_2, X = x] f(z_2 \mid x) dz_2 \\
     & = \int \int E[Y \mid Z_1 = 1, u,  z_2, 
 x] f(u \mid Z_1 = 1, z_2, x) f(z_2 \mid x) du dz_2  \\ &= \int \int E[Y \mid Z_1 = 1, u, z_2, x] P(Z_1 = 1 \mid u, z_2, x)/P(Z_1 = 1 \mid z_2, x) f(u \mid z_2, x) f(z_2 \mid x) du dz_2 \\ &= \int E\{ E[Y \mid Z_1 = 1, U, z_2, x] P(Z_1 = 1 \mid U, z_2, x)/P(Z_1 = 1 \mid z_2, x) \mid z_2, x \} f(z_2 \mid x)  dz_2 \\ & \geq \int E\{ E[Y \mid Z_1 = 1, U, z_2, x] \mid z_2, x \} f(z_2 \mid x)  dz_2 \\ &= \int \int E[Y \mid Z_1 = 1, u, z_2, x ] f(u\mid z_2, x)  f(z_2, x) du  dz_2 \\ &= \int \int E[Y \mid Z_1 = 1, u, z_2, x] f(u, z_2 \mid x) du  dz_2 \\  &=  E[Y(1) \mid X = x].
\end{aligned}
\end{equation*}
The inequality is due to Lemmas \ref{lem: covariance_inequality} and \ref{lem: outcome regression equivalence} combined with $E[Y \mid Z_1 = a, u, x]$ and $P[ Z_1 = 1 \mid u, z_2, x]$ either both non-decreasing or non-increasing in $u$ as well as $E_U\{P(Z_1 = 1 \mid U, z_2, x)/P(Z_1 = 1 \mid z_2, x) \mid z_2, x \} = 1$. 
Thus, we have established $\int E[Y \mid Z_1 = 1, z_2, x] f(z_2 \mid x) dz_2 \geq E[Y(1) \mid X = x] $ when the monotonicity signs match. When the signs differ, the $\geq$ sign in the above derivation turns to a $\leq$ sign, and so $\int E[Y \mid Z_1 = 1, z_2, x] f(z_2 \mid x) dz_2 \leq E[Y(1) \mid X = x] $ in that scenario. 

In the case for $E[Y(0) \mid X = x]$, the above derivation is almost the same, except with $Z_1 = 1$ replaced with $Z_1 = 0$. When the signs of the monotonicity of $E[Y \mid Z_1 = a, u, x]$ and $P[ Z_1 = 1 \mid u, z_2, x]$ match, the signs of the monotonicity of $E[Y \mid Z_1 = a, u, x]$ and $P[ Z_1 = 0 \mid u, z_2, x] = 1 - P[ Z_1 = 1 \mid u, z_2, x]$ disagree, so the $\geq$ sign in the above derivation would be a $\leq$ sign, implying $\int E[Y \mid Z_1 = 0, z_2, x] f(z_2 \mid x) dz_2 \leq E[Y(0) \mid X = x] $. On the other hand, when the signs of the monotonicity of $E[Y \mid Z_1 = a, u, x]$ and $P[ Z_1 = 1 \mid u, z_2, x]$ disagree, the signs of the monotonicity of $E[Y \mid Z_1 = a, u, z_2, x]$ and $P[ Z_1 = 0 \mid u, z_2, x] = 1 - P[ Z_1 = 1 \mid u, z_2, x]$ agree, so the $\geq$ sign in the above derivation would still be a $\geq$ sign, implying $\int E[Y \mid Z_1 = 0, z_2, x] f(z_2 \mid x) dz_2 \geq E[Y(0) \mid X = x] $. 

Thus, we have demonstrated that $$\int E[Y \mid Z_1 = 1, z_2, x] f(z_2 \mid x) dz_2 - \int E[Y \mid Z_1 = 0, z_2, x] f(z_2 \mid x) dz_2  \geq E[Y(1)-Y(0) \mid X = x]$$ when the signs agree and 
$$\int E[Y \mid Z_1 = 1, z_2, x] f(z_2 \mid x) dz_2 - \int E[Y \mid Z_1 = 0, z_2, x] f(z_2 \mid x) dz_2  \leq E[Y(1)-Y(0) \mid X = x]$$ when the signs disagree.
\end{proof}

\begin{proposition}
\label{prop: differential_effect_bound}
When Assumptions \ref{assump: confoundness}, \ref{assump: ER}, \ref{assump: positivity}, \ref{assumption: outcome_mono}, \ref{assumption: treatment_mono}, and \ref{assumption: z_2_influenced_more} hold,
$E[Y \mid Z_1 = 1, Z_2 = 0, X = x]$ and  $E[Y \mid Z_1 = 0, Z_2 = 1, X = x]$ serve as opposing bounds to $E[Y(1) \mid X = x]$ and $E[Y(0) \mid X = x]$ from Lemma \ref{lem: adjusted_for_z_2}, respectively.
\end{proposition}

\begin{proof}
Again, we use similar ideas as in \cite{VanderWeele2008}.
\begin{equation*}
\begin{aligned}
&E[Y \mid Z_1 = 1, Z_2 = 0, x] = \int E[Y \mid Z_1 = 1, Z_2 = 0, u, x] f(u \mid Z_1 = 1, Z_2 = 0, x) du \\  &= \int E[Y \mid Z_1 = 1, Z_2 = 0, u, x] P(Z_1 = 1,Z_2 = 0 \mid u, x)/P(Z_2 = 0, Z_1 = 1 \mid x)f(u \mid Z_1 = 1, Z_2 = 0, x) du \\  &= E\{E[Y \mid Z_1 = 1, Z_2 = 0, U, x] P(Z_2 = 0, Z_1 = 1\mid U, x)/P(Z_2 = 0, Z_1 = 1 \mid x) \mid x\} \\ &  \gtreqless 
 E_U\{E[Y \mid Z_1 = 1, Z_2 = 0, U, x ]\} \times E_U\{P(Z_2 = 0, Z_1 = 1 \mid U, x)/P(Z_1 = 1, Z_2 = 0 \mid x) \mid x\} \\ &= E_U\{E[Y \mid Z_1 = 1, Z_2 = 0, U, x]\} = 
 E_{U,Z_2}\{E[Y \mid Z_1 = 1, Z_2, U, x]\} = E[Y(1) \mid x].
\end{aligned} 
\end{equation*}
The inequality is due to Lemmas \ref{lem: covariance_inequality} and \ref{lem: outcome regression equivalence}. The second to last equality follows from Lemma \ref{lem: outcome regression equivalence}. The sign of the ambiguous $\gtreqless$ depends on the monotonicity signs of $P(Z_2 = 0, Z_1 = 1\mid u, x)$ and $E(Y \mid Z_1 = 1,  u, x)$. Note that by Assumption \ref{assumption: z_2_influenced_more}, $P(Z_2 = 0, Z_1 = 1\mid u, x)$ inherits the opposite sign of monotonicity as $P(Z_1 = 1\mid z_2, u, x)$. No matter the scenario, the $\gtreqless$ will be the opposite of the sign of comparing $\int E[Y \mid Z_1 = 1, Z_2 = z_2, X = x] f(z_2 \mid x) dz_2$ with $E[Y(1) \mid X = x]$, which is exactly what we set out to show. The steps for $E[Y \mid Z_1 = 0, Z_2 = 1, x]  \gtreqless E[Y(0) \mid x]$ are similar, and again, the sign of $\gtreqless$ will be opposite the sign of comparing $\int E[Y \mid Z_1 = 0, Z_2 = z_2, X = x] f(z_2 \mid x) dz_2$ with $E[Y(0) \mid X = x]$.
\end{proof}

\begin{proof}[Proof of Theorem \ref{thm: bound treatment effects}]
    Combining Lemma \ref{lem: adjusted_for_z_2} and Proposition \ref{prop: differential_effect_bound}, we have shown that we either have
$E[Y(1)- Y(0) \mid X = x] \in [\mu_1(x), \mu_2(x)]$ or $E[Y(1)- Y(0) \mid X = x] \in [\mu_2(x), \mu_1(x)]$, where 
\begin{align*}
&\mu_1(x) \equiv E[Y \mid Z_1 = 1, Z_2 = 0, x] 
 - E[Y \mid Z_1 = 0, Z_2 = 1, x] 
 \\
&\mu_2(x) \equiv \int E[Y \mid Z_1 = 1, Z_2 = z_2, X = x] f(z_2 \mid x) dz_2 - \int E[Y \mid Z_1 = 0, Z_2 = z_2, X = x] f(z_2 \mid x) dz_2 \\
& = \mathbb{E}\left[\frac{Z_1 Y}{\mathbb{E} [Z_1|Z_2,X] } \Bigg| X=x\right] - \mathbb{E}\left[\frac{(1 -Z_1) Y}{\mathbb{E} [1 - Z_1|Z_2,X] } \Bigg| X=x\right].
\end{align*}
These bounds exactly match the CATE bounds (up to $\mu_2(x)$ being written in IPW form rather than outcome regression form). ATE bounds are obtained by taking the expectation with respect to the distribution of $X$. 
\end{proof}

\section*{Appendix II: Justification of Assumption \ref{assumption: z_2_influenced_more}}

To illustrate Assumption \ref{assumption: z_2_influenced_more}, we consider verifying it under a simple model. Specifically, we consider  the 2-parameter logistic model. Under the the 2-parameter logistic model , $Z_1 \perp Z_2 \mid U, X$, and 
\begin{align*}
&\mathbb{P}(Z_1 = a, Z_2 = b | U = u, X = x) \\
    &= \frac{\exp \{a  (\alpha_1 u + \beta_1 x) + b  (\alpha_2 u + \beta_2 x) \} }{1 + \exp\{\alpha_1 u + \beta_1 x \} + \exp\{\alpha_2 u + \beta_2 x \} + \exp\{\alpha_1 u + \beta_1 x + \alpha_2 u + \beta_2 x \}}.
\end{align*}
Here, $x$ (which includes an intercept) and the $\beta$'s can be though of as vectors, with $\beta_1 x$ and $\beta_2 x$ denoting inner products. Then, 

\begin{align*}
\mathbb{P}(Z_1 = 0, Z_2 = 1 \mid U = u, X = x) &= \frac{\exp \{(\alpha_2 u + \beta_2 x) \} }{1 + \exp\{\alpha_1 u + \beta_1 x \} + \exp\{\alpha_2 u + \beta_2 x \} + \exp\{\alpha_1 u + \beta_1 x + \alpha_2 u + \beta_2 x \}},
\\
\mathbb{P}(Z_1 = 1, Z_2 = 0 \mid U = u, X = x) &= \frac{\exp \{(\alpha_1 u + \beta_1 x)  \} }{1 + \exp\{\alpha_1 u + \beta_1 x \} + \exp\{\alpha_2 u + \beta_2 x \} + \exp\{\alpha_1 u + \beta_1 x + \alpha_2 u + \beta_2 x \}}.
\end{align*}

When $U$ is a continuous variable, a derivative calculation yields
\begin{align*}
&\frac{d}{du} \mathbb{P}(Z_1 = 0, Z_2 = 1 \mid U = u, X = x) \\ & = \frac{\exp(\alpha_2 u + \beta_2 x) \left[ \alpha_2 - \alpha_1 \exp(\alpha_1 u + \beta_1 x) - \alpha_1 \exp(\alpha_1 u + \beta_1 x + \alpha_2 u + \beta_2 x) \right]}{\left( 1 + \exp(\alpha_1 u + \beta_1 x) + \exp(\alpha_2 u + \beta_2 x) + \exp(\alpha_1 u + \beta_1 x + \alpha_2 u + \beta_2 x) \right)^2}, \\
&\frac{d}{du} \mathbb{P}(Z_1 = 1, Z_2 = 0 \mid U = u, X = x) \\ & = \frac{\exp(\alpha_1 u + \beta_1 x) \left[ \alpha_1 - \alpha_2 \exp(\alpha_2 u + \beta_2 x) - \alpha_2 \exp(\alpha_1 u + \beta_1 x + \alpha_2 u + \beta_2 x) \right]}{\left( 1 + \exp(\alpha_1 u + \beta_1 x) + \exp(\alpha_2 u + \beta_2 x) + \exp(\alpha_1 u + \beta_1 x + \alpha_2 u + \beta_2 x) \right)^2}.
\end{align*}

The first quantity will be non-negative as long as 
\begin{align*}
&\alpha_2 - \alpha_1 \exp(\alpha_1 u + \beta_1 x) - \alpha_1 \exp(\alpha_1 u + \beta_1 x + \alpha_2 u + \beta_2 x) \geq 0 \iff \\ & \alpha_2/\alpha_1 \geq \exp(\alpha_1 u + \beta_1 x) + \exp(\alpha_1 u + \beta_1 x + \alpha_2 u + \beta_2 x).
\end{align*}

The second quantity will be non-positive as long as 
\begin{align*}
&\alpha_1 - \alpha_2 \exp(\alpha_2 u + \beta_2 x) - \alpha_2 \exp(\alpha_1 u + \beta_1 x + \alpha_2 u + \beta_2 x) \leq 0 \iff \\ & \alpha_2/\alpha_1 \geq \exp(\alpha_2 u + \beta_2 x) + \exp(\alpha_1 u + \beta_1 x + \alpha_2 u + \beta_2 x).
\end{align*}

In the context of the real data application, we fit a Bayesian model, assuming $U$ is binary rather than continuous, and fit an analogous model with 4 items (smoking, hard drug usage, high fat intake, alcohol abuse). The 4 items are modeled as conditionally independent given $U, X$. Then based on Bayesian posterior estimates for the $\alpha$ and $\beta$ parameters, we check if $\mathbb{P}(Z_1 = 0, Z_2 = 1 \mid U = 1, X = x) \geq \mathbb{P}(Z_1 = 0, Z_2 = 1 \mid U = 0, X = x)$ and whether $\mathbb{P}(Z_1 = 1, Z_2 = 0 \mid U = 1, X = x) - \mathbb{P}(Z_1 = 1, Z_2 = 0 \mid U = 0, X = x) \leq 0$. The specific model is as follows:

\begin{equation*}
\begin{aligned}
U \sim \text{Bern}(0.2), \\ 
\alpha_1, \ldots, \alpha_4 \stackrel{\text{ind}}{\sim} N(0.2, 2 \times I_d), \\ \beta_1, \ldots, \beta_4 \stackrel{\text{ind}}{\sim} N(0, 2 \times I_d).
\end{aligned}
\end{equation*}

Since the binary $U$ is latent, we instead marginalize over the $U$ that has a fixed success probability, and is modeled as independent from all other data. The marginalized (over $U$) likelihood from a single observation is as follows:

\begin{align*}
 l(z_1,\ldots z_4; \mathbf{\alpha}, \mathbf{\beta}) &= 0.2 \times \prod_{i=1}^4 \left( \frac{\exp(\alpha_i + \beta_i x)}{1 + \exp(\alpha_i + \beta_i x)}\right)^{z_i} \left( \frac{1}{1 + \exp(\alpha_i + \beta_i x)}\right)^{1 - z_i} \\ &+ 0.8 \times \prod_{i=1}^4 \left( \frac{\exp(\beta_i x)}{1 + \exp(\alpha_i + \beta_i x)}\right)^{z_i} \left( \frac{1}{1 + \exp(\alpha_i + \beta_i x)}\right)^{1 - z_i}   
\end{align*}
We fit the model using the \texttt{rstan} software, with the standard, recommended 1000 warm-up iterations and 1000 sampling iterations on 4 chains. All parameters were constrained to lie between -3 and 3. We take the mean of the 1000 posterior samples as a point estimate for each of the $\alpha$ and $\beta$ parameters. At each value of covariates $X = x$ in the data, we compute estimated $\mathbb{P}(Z_1 = 0, Z_2 = 1 \mid U = u, X = x)$ and $\mathbb{P}(Z_1 = 0, Z_2 = 1 \mid U = u, X = x)$ values (plugging in the posterior means of the parameters). We found that the estimated probabilities followed the desired inequalities $\mathbb{P}(Z_1 = 0, Z_2 = 1 \mid U = 1, X = x) \geq \mathbb{P}(Z_1 = 0, Z_2 = 1 \mid U = 0, X = x)$ and $\mathbb{P}(Z_1 = 1, Z_2 = 0 \mid U = 1, X = x) - \mathbb{P}(Z_1 = 1, Z_2 = 0 \mid U = 0, X = x) \leq 0$ at all values of the covariates in the dataset. This suggests that the Assumption \ref{assumption: z_2_influenced_more} is plausible in the data application.

\begin{table}[!tb]
    \centering
    \begin{tabular}{|ccc|}
    \hline
        Item (Binary) & $\alpha$ (Discrimination) & Standard Deviation\\ \hline 
        Smoking Status & 0.003   & 1.501  \\ 
        Hard Drug Usage Status & 0.299  & 1.534  \\
        Unhealthy Fat Intake & 1.001 & 1.693\\ 
        Alcohol Abuse & 0.753 & 1.522\\ \hline
    \end{tabular}
    \caption{ Summary table of fitting a Bayesian 2-parameter logistic IRT model to the NHANES dataset with 4 items. The second column collects the posterior means of the $\alpha$ parameters, and the third column the standard deviation of the 1000 posterior draws.}
    \label{tab: checking assumption}
\end{table}

\section*{Appendix III: Alternative Bounding Analysis via Differential Effects}

Alternatively, to bound the ATE and CATE under different set of assumptions, we provide relaxed exclusion restriction and modified monotonicity assumption below.

\begin{assumption}[Relaxed Exclusion Restriction] \label{assump: RER}
 Let $ \mathbb{E}[Y(z_1) | Z_2 = 1, X, U] + \epsilon_1 =  \mathbb{E}[Y(z_1) | Z_2 = 0, X, U] + \epsilon_2 = \mathbb{E}[Y(z_1) |  X, U]$ for some constant $\epsilon_1$ and $\epsilon_2$, then 
 either (a) $\epsilon_1 \geq 0, \epsilon_2 \leq 0$  or (b) $\epsilon_1 \leq 0, \epsilon_2 \geq 0$ holds  for every $X, U,$ and $z_1 \in \{0,1\}.$
\end{assumption}

It is clear to see that $\epsilon_1 \epsilon_2 \leq 0$ in Assumption \ref{assump: RER}. With Assumption \ref{assump: RER}, we allow for a direct effect of $Z_2$ on outcome which enlarges the scope of candidates for $Z_2$. However, we require this direct effect to be constant across covariate $X$ and $U$, since otherwise no constraint is put on $(Y, Z_2)$. Assumption \ref{assump: RER} (a) implies that there is a direct negative effect of $Z_2$ on $Y$ while (b) implies that there is a direct positive effect of $Z_2$ on $Y$. Assumption \ref{assump: RER} also says that $\mathbb{E}[Y(z_1)|X,U]$ has the same functional form as $\mathbb{E}[Y(z_1)|Z_2 = b, X, U]$ up to a constant (i.e., $Z_2 \indep (X,U) $).

\subsection*{Bounding ATE}

To bound the ATE, let $\mathbb{E}[f_a(X, U)|Z_1 = a, Z_2 = b, X = x ]  =  h(a, b, x)$ where $ f_{a}(X,U) = \mathbb{E}[Y(a)|U,X] - \mathbb{E}[Y(a)] $ for $a \in \{0,1\} $. We then make the following conditional monotonicity assumptions. 

\begin{assumption}[Conditional Monotonicity for ATE] \label{assump: monotone conditioned ATE}
For any fixed $x \in \mathbb{R}^l$, either (a) $h(0,0,x) \leq h(1,0,x) \leq h(0,1,x) \leq h(1,1,x)$ or (b) $h(0,0,x) \geq h(1,0,x) \geq h(0,1,x) \geq h(1,1,x)$ holds. 
\end{assumption}

Again, we consider the meaning of Assumption \ref{assump: monotone conditioned ATE} in the context of the 2-parameter logistic model  from  item response theory (IRT) \citep{hambleton2013item} where the item responses are the two treatments $Z_1$ and $Z_2$:

\begin{align*}
    \mathbb{P} (Z_1 = 1 |U = u, X = x) = \frac{\exp\{\alpha_1 u   + \beta_1 x \} }{1 + \exp\{ \alpha_1 u + \beta_1 x \} },
\end{align*}
and
\begin{align*}
        \mathbb{P} (Z_2 = 1 | U = u, X = x) = \frac{\exp\{\alpha_2 u  + \beta_2 x \} }{1 + \exp\{ \alpha_2 u  + \beta_2 x \} },
\end{align*}
 where $u$ is the latent variable, $\alpha_1$ and $\alpha_2$ are discrimination parameters while $\beta_1$ and $\beta_2$ are coefficients of covariate $x$. In our context, the latent variable $u$ is the unmeasured confounding, and $x$ is age; $\alpha_1, \alpha_2$  determine the rate at which the probability of taking the corresponding treatment $Z_1$ or $Z_2$ changes as a function of the level of unmeasured confounding. By the local independence assumption from IRT, that is, item responses are independent conditional on  the latent variable, we have
\begin{equation} 
    \begin{split}
           &\qquad \mathbb{P}(Z_1 = a, Z_2 = b | U = u, X = x) \\
    &= \frac{\exp \{a  (\alpha_1 u + \beta_1 x) + b  (\alpha_2 u + \beta_2 x) \} }{1 + \exp\{\alpha_1 u + \beta_1 x \} + \exp\{\alpha_2 u + \beta_2 x \} + \exp\{\alpha_1 u + \beta_1 x + \alpha_2 u + \beta_2 x \}}.
    \end{split}
\end{equation}
Hence, 
\begin{align*}
     h(a, b, x) = \frac{ \int_U f_a (x, u)\mathbb{P}(Z_1 = a, Z_2 = b | U = u, X = x) g (u|x)  du  }{  \int_U \mathbb{P} (Z_1 = a, Z_2 = b | U = u, X = x) g (u|x)  du  }
\end{align*}
with $ g(u|x)$ being density functions of $U$ given $X$. Then $h(0,0,x) \leq h(1,0,x) \leq h(0,1,x) \leq h(1,1,x)$ if $\alpha_2 \geq \alpha_1 > 0$ and $\mathbb{E}[Y(a)|X = x, U = u] - \mathbb{E}[Y(a)]$ is non-decreasing in $u$, and $a$ given $x$. And $h(0,0,x) \geq h(1,0,x) \geq h(0,1,x) \geq h(1,1,x)$ if $ \alpha_2 \leq \alpha_1 < 0 $ and $\mathbb{E}[Y(a)|X = x, U = u] - \mathbb{E}[Y(a)]$ is non-increasing in $u$ and $a$ given $x$. 

To understand the notation and assumptions better, we consider the study of the effect of smoking on the blood level of cadmium using the NHANES dataset. In our study, $Y$ represents the blood level of cadmium, $Z_1$ is smoking, $Z_2$ is past hard drug usage (we exclude people who are currently using hard drugs from the study), $X$ are observed confounders such as age, and $U$ is unmeasured confounding such as other health risky behaviors. In this case, assumption \ref{assump: monotone conditioned ATE} holds under the logistic mixed effect model if 1.) hard drug usage has a higher discrimination parameter (i.e. a higher rate at which the probability of hard drug usage status changes given a certain level of unmeasured confounding) than smoking, and 2.) other health risky behaviors ($U$), age, as well as smoking ($Z_1$), have a non-decreasing effect on the blood level of cadmium ($Y$) for any age (i.e. $\mathbb{E}[Y(z_1)|X = x, U = u]$ is non-decreasing in  $u$ and $z_1$ given any $x$). In general, assumption \ref{assump: monotone conditioned ATE} could hold under a broader class of models \citep{holland1986conditional}. Recall that we define $\mu_1 = \mathbb{E}_X [\mu_1(X)], \mu_2 = \mathbb{E}_X[ \mu_2(X)], \tau^+ = \max\{\mu_1,\mu_2\}$ and $\tau^- = \min\{\mu_1,\mu_2\}$, then we have the bounding analysis result for the ATE. 

\begin{theorem}[Identification of the ATE] \label{thm: bound tau}
$\tau \in [\tau^-, \tau^+]$ if

\begin{itemize}
    \item[1.)] assumption \ref{assump: confoundness}, \ref{assump: ER}, \ref{assump: positivity}, and \ref{assump: monotone conditioned ATE} (a) hold, or
    \item[2.)] assumption \ref{assump: confoundness}, \ref{assump: ER}, \ref{assump: positivity}, and \ref{assump: monotone conditioned ATE} (b) hold, or
    \item[3.)] assumption \ref{assump: confoundness}, \ref{assump: RER} (a), \ref{assump: positivity}, and \ref{assump: monotone conditioned ATE} (b) hold, or
    \item[4.)] assumption \ref{assump: confoundness}, \ref{assump: RER} (b), \ref{assump: positivity}, and \ref{assump: monotone conditioned ATE} (a) hold. 
\end{itemize}
Additionally, $\tau = \mu_1 = \mu_2$ if assumption \ref{assump: confoundness}, \ref{assump: ER}, \ref{assump: positivity} hold and $h(0,0,x) = h(1,0,x) = h(0,1,x) = h(1,1,x)$ for every $x \in \mathbb{R}^l$. 

\end{theorem}

\begin{proof}
    We first consider cases of 1.) and 2.). For differential effect, we have
    \begin{align*}
        & \quad \mathbb{E}_X \left\{ \mathbb{E} [Y|Z_1 = 1, Z_2 = 0, X] \right\} = \mathbb{E}[Y|Z_1 = 1, Z_2 = 0] \\
        &= \mathbb{E} [Z_1 Y(1) + (1 - Z_1) Y(0) |Z_1 = 1, Z_2 = 0 ] \quad \text{(by the consistency assumption)}   \\
        & = \mathbb{E} [Y(1) | Z_1 = 1, Z_2 = 0 ] \\
        & = \mathbb{E} \left\{ \mathbb{E}[Y(1) | Z_1 = 1, Z_2 = 0, X, U   ] | Z_1 = 1, Z_2 = 0     \right\} \quad \text{(by the law of total expectation)} \\
        & = \mathbb{E} \left\{ \mathbb{E}[Y(1) | Z_2 = 0, X, U   ] | Z_1 = 1, Z_2 = 0     \right\} \quad \text{(by assumption \ref{assump: confoundness})} \\
        & = \mathbb{E} \left\{ \mathbb{E}[Y(1) | X, U   ] | Z_1 = 1, Z_2 = 0     \right\} \quad \text{(by assumption \ref{assump: ER})} \\
        & = \mathbb{E}[Y(1)] + \mathbb{E} \left\{ f_1 (X, U) | Z_1 = 1, Z_2 = 0     \right\}. 
    \end{align*}
Similarly, we have $ \mathbb{E}_X\left\{ \mathbb{E}[Y|Z_1 = 0, Z_2 = 1, X]  \right\} =  \mathbb{E}[Y|Z_1 = 0, Z_2 = 1] = \mathbb{E}[Y(0)] + \mathbb{E} \left\{ f_0 (X, U) | Z_1 = 0, Z_2 = 1     \right\} $. For IPW, we have
\begin{align*}
    & \quad \mathbb{E}_X \left\{ \mathbb{E}\left[ \frac{Z_1 Y}{ \mathbb{E} [Z_1 | Z_2, X] } \mid X  \right] \right\} = \mathbb{E}\left[ \frac{Z_1 Y}{ \mathbb{E} [Z_1 | Z_2, X] }  \right] \\
    &= \mathbb{E}\left[ \frac{Z_1 Y(1)}{ \mathbb{E} [Z_1 | Z_2, X] }  \right] \quad \text{(by the consistency assumption)}   \\
    & =  \mathbb{E} \Big[ \mathbb{E} \{\frac{Z_1 Y(1)}{\mathbb{E}[Z_1 | Z_2, X] } | Z_2, X, U   \}      \Big] \quad \text{(by the law of total expectation)} \\
    & = \mathbb{E} \Big[ \frac{1}{\mathbb{E}[Z_1 | Z_2, X]} \mathbb{E} \{ Z_1 | Z_2, X, U   \} \mathbb{E} \{ Y(1) | Z_2, X, U   \}     \Big] \quad \text{(by assumption \ref{assump: confoundness})} \\
    & = \mathbb{E} \Big[ \frac{1}{\mathbb{E}[Z_1 | Z_2, X]} \mathbb{E} \{ Z_1 | Z_2, X, U   \} \mathbb{E} \{ Y(1) | X, U   \}     \Big] \quad \text{(by assumption \ref{assump: ER})} \\
    & = \mathbb{E} \Big[ \frac{1}{\mathbb{E}[Z_1 | Z_2, X]} \mathbb{E} \{ Z_1 | Z_2, X, U   \} \{ \mathbb{E}[Y(1)] + f_1 (X, U)   \}  \Big]  \\
    & = \mathbb{E} \Big[ \mathbb{E} \{ \frac{1}{\mathbb{E}[Z_1 | Z_2, X]} Z_1 \{ \mathbb{E}[Y(1)] + f_1 (X, U)   \} | Z_2, X, U \} \Big] \quad \text{(by the law of total expectation)} \\
    & = \mathbb{E} \Big[  \frac{1}{\mathbb{E}[Z_1 | Z_2, X]} Z_1 \{ \mathbb{E}[Y(1)] + f_1 (X, U)   \}  \Big] \\
    & = \mathbb{E}[Y(1)] + \mathbb{E} \Big[ \frac{1}{\mathbb{E}[Z_1|Z_2, X] } Z_1 f_1 (X,U)  \Big] \\
    & = \mathbb{E}[Y(1)] + \mathbb{E} \Big[ \frac{1}{\mathbb{E}[Z_1|Z_2, X] } Z_1 \mathbb{E}[f_1 (X,U) | Z_1, Z_2, X ] \Big] \quad \text{(by the law of total expectation)}\\
    & = \mathbb{E}[Y(1)] + \mathbb{E} \Big\{ \mathbb{E} \Big[ \frac{1}{\mathbb{E}[Z_1|Z_2, X] } Z_1 \mathbb{E}[f_1 (X,U) | Z_1, Z_2, X ] \mid X \Big] \Big\} \quad \text{(by the law of total expectation)}\\
    & = \mathbb{E}[Y(1)] + \mathbb{E} \Big\{ \sum_{a \in \{0,1 \} } \sum_{b \in \{0,1\}} \frac{a}{\mathbb{E}[Z_1|Z_2 = b, X] } \mathbb{E}[f_1(X,U)|Z_1 = a, Z_2 = b, X]  \mathbb{P}(Z_1 = a, Z_2 = b | X) \Big\} \\
    & = \mathbb{E}[Y(1)] + \mathbb{E} \Big\{  \sum_{b \in \{0,1\}} \frac{1}{\mathbb{E}[Z_1|Z_2 = b, X] } \mathbb{E}[f_1(X,U)|Z_1 = 1, Z_2 = b, X]  \mathbb{P}(Z_1 = 1, Z_2 = b | X) \Big\} \\
        & = \mathbb{E}[Y(1)] + \mathbb{E} \Big\{  \sum_{b \in \{0,1\}} \frac{1}{\mathbb{E}[Z_1|Z_2 = b, X] } \mathbb{E}[f_1(X,U)|Z_1 = 1, Z_2 = b, X]  \mathbb{P}(Z_1 = 1 \mid Z_2 = b,  X) \mathbb{P} (Z_2 = b \mid X) \Big\} \\
    & = \mathbb{E}[Y(1)] +  \sum_{b \in \{0,1\}} \mathbb{E} \Big\{  \mathbb{E}[f_1(X,U)|Z_1 = 1, Z_2 = b, X]  \mathbb{P}(Z_2 = b \mid X) \Big\}. 
\end{align*}
Similarly, we have
\begin{align*}
  & \quad \mathbb{E}_X \left\{ \mathbb{E}\left[\frac{(1 -Z_1) Y}{\mathbb{E} [1 - Z_1|Z_2,X] } \mid X \right] \right\}=  \mathbb{E}\left[\frac{(1 -Z_1) Y}{\mathbb{E} [1 - Z_1|Z_2,X] }\right] \\ 
  &=  \mathbb{E}[Y(0)] +  \sum_{b \in \{0,1\}} \mathbb{E} \Big\{  \mathbb{E}[f_0(X,U)|Z_1 = 0, Z_2 = b, X]  \mathbb{P}(Z_2 = b \mid X) \Big\}. 
\end{align*}
Hence, 
\begin{align*}
    & \qquad \tau + \mathbb{E}[f_1(X,U)|Z_1 = 1, Z_2 = 0] - \mathbb{E}[f_0(X,U)|Z_1 = 0, Z_2 = 1] \\
& = \mathbb{E}[Y|Z_1 = 1, Z_2 = 0] -  \mathbb{E}[Y|Z_1 = 0, Z_2 = 1] = \mu_1,
\end{align*}
and
\begin{align*}
    & \qquad \tau +  \sum_{b \in \{0,1\}} \mathbb{E} \Big\{  \left(\mathbb{E}[f_1(X,U)|Z_1 = 1, Z_2 = b, X] -   \mathbb{E}[f_0(X,U)|Z_1 = 0, Z_2 = b, X] \right) \mathbb{P}(Z_2 = b \mid X) \Big\} \\
    &= \mathbb{E}\left[\frac{Z_1 Y}{\mathbb{E} [Z_1|Z_2,X] }\right] - \mathbb{E}\left[\frac{(1 -Z_1) Y}{\mathbb{E} [1 - Z_1|Z_2,X] }\right] = \mu_2.
\end{align*}
It is clear to see that $\tau = \mu_1 = \mu_2$ if $h(0,0,x) = h(1,0,x) = h(0,1,x) = h(1,1,x)$. $\mu_1 \leq \tau \leq \mu_2$ if $h(0,0,x) \leq h(1,0,x) \leq h(0,1,x) \leq h(1,1,x)$. And $\mu_2 \leq \tau \leq \mu_1$ if $h(0,0,x) \geq h(1,0,x) \geq h(0,1,x) \geq h(1,1,x)$. For cases 3.) and 4.), under assumption \ref{assump: RER},  we have
\begin{align*}
    \mathbb{E}[Y(z_1) | Z_2 = 1, X, U] + \epsilon_1 =  \mathbb{E}[Y(z_1) | Z_2 = 0, X, U] + \epsilon_2 = \mathbb{E}[Y(z_1) |  X, U].
\end{align*}
Then by similar algebra as before, we have 
\begin{align*}
    & \qquad \tau + \mathbb{E}[f_1(X,U)|Z_1 = 1, Z_2 = 0] - \mathbb{E}[f_0(X,U)|Z_1 = 0, Z_2 = 1] \\
& = \mathbb{E}[Y|Z_1 = 1, Z_2 = 0] -  \mathbb{E}[Y|Z_1 = 0, Z_2 = 1] + (\epsilon_1 - \epsilon_2) = \mu_1 + (\epsilon_1 - \epsilon_2),
\end{align*}
and
\begin{align*}
    & \qquad \tau + \sum_{b \in \{0,1\}} \mathbb{E} \Big\{  \left(\mathbb{E}[f_1(X,U)|Z_1 = 1, Z_2 = b, X] -   \mathbb{E}[f_0(X,U)|Z_1 = 0, Z_2 = b, X] \right) \mathbb{P}(Z_2 = b \mid X) \Big\} \\
    &= \mathbb{E}\left[\frac{Z_1 Y}{\mathbb{E} [Z_1|Z_2,X] }\right] - \mathbb{E}\left[\frac{(1 -Z_1) Y}{\mathbb{E} [1 - Z_1|Z_2,X] }\right]  = \mu_2.
\end{align*}
It is clear to see that $\mu_1 + (\epsilon_1 - \epsilon_2) \leq \tau \leq \mu_2$ if $h(0,0,x) \leq h(1,0,x) \leq h(0,1,x) \leq h(1,1,x)$ and $\epsilon_1 - \epsilon_2 \leq 0$ (assumption \ref{assump: RER} (b)). And $\mu_2 \leq \tau \leq \mu_1 + (\epsilon_1 - \epsilon_2)$ if $h(0,0,x) \geq h(1,0,x) \geq h(0,1,x) \geq h(1,1,x)$ and $\epsilon_1 - \epsilon_2 \geq 0$ (assumption \ref{assump: RER} (a)). 
\end{proof}

\subsection*{Bounding CATE}

To bound the CATE, let $\mathbb{E}[k_a(X, U)|Z_1 = a, Z_2 = b, X = x ]  =  l(a, b, x)$ where $ k_{a}(X,U) = \mathbb{E}[Y(a)|U,X] - \mathbb{E}[Y(a)|X] $ for $a \in \{0,1\} $. Note that $k_a (X, U)$ is different from $f_a (x,u)$ in the second term. However, $f_a (x,u)$ and $k_a(x,u)$ exhibit the same pattern in $u$ for fixed $a$ and $x$. Then we have the following conditional monotonicity assumption for CATE.  

\begin{assumption}[Conditional Monotonicity for CATE] \label{assump: monotone conditioned CATE}
For any fixed $x \in \mathbb{R}^l$, either (a) $l(0,0,x) \leq l(1,0,x) \leq l(0,1,x) \leq l(1,1,x)$ or (b) $l(0,0,x) \geq l(1,0,x) \geq l(0,1,x) \geq l(1,1,x)$ holds. 
\end{assumption}

We use an example to demonstrate the insight behind assumption \ref{assump: monotone conditioned CATE}. Similar as assumption \ref{assump: monotone conditioned ATE}, consider the model in (\ref{equ:rasch model}), we have 
\begin{align*}
l(a,b,x) = \frac{ \int_U k_a (x,u) \mathbb{P}(Z_1 = a, Z_2 = b | U = u, X = x) g (u|x) p(x) du  }{  \int_U \mathbb{P} (Z_1 = a, Z_2 = b | U = u, X = x) g (u|x) p(x) du  }.
\end{align*}
Again, $l(0,0,x) \leq l(1,0,x) \leq l(0,1,x) \leq l(1,1,x)$ if $0 < \alpha_1 \leq \alpha_2$ and $\mathbb{E}[Y(a)|U = u, X=x]- \mathbb{E}[Y(a)|X]$ is non-decreasing in both $u$ and $a$ for any fixed $x \in \mathbb{R}^l$. And $l(0,0,x) \geq l(1,0,x) \geq l(0,1,x) \geq l(1,1,x)$ if $0 > \alpha_1 \geq \alpha_2$ and $\mathbb{E}[Y(a)|U = u, X=x]- \mathbb{E}[Y(a)|X]$ is non-increasing in both $u$ and $a$ for any fixed $x \in \mathbb{R}^l$. 

In the motivating example, similar to assumption \ref{assump: monotone conditioned ATE}, assumption \ref{assump: monotone conditioned CATE} holds under the logistic mixed effect model if 1.) hard drug usage status has a higher discrimination parameter (i.e. a higher rate at which the probability of hard drug usage status changes given a certain level of unmeasured confounding) than smoking, and 2.) other health risky behavior ($U$), as well as smoking ($Z_1$), have a non-decreasing effect on the blood level of cadmium ($Y$) on average for all levels of the observed covariates such as age ($X$) (i.e. $\mathbb{E}[Y(z_1)|U = u, X=x] - \mathbb{E}[Y(a)|X]$ is non-decreasing in both $u$ and $z_1$ for any fixed $x \in \mathbb{R}^+$). With the assumptions listed above, the main result for the identification of the CATE can be stated.

\begin{theorem}[Identification of the CATE] \label{thm: bound tau(x)} $\tau (x) \in [\tau^- (x), \tau^+ (x)]$ for any fixed $x \in \mathbb{R}^l$ if
\begin{itemize}
    \item[1.)] assumption \ref{assump: confoundness}, \ref{assump: ER}, \ref{assump: positivity}, and \ref{assump: monotone conditioned CATE} (a) hold, or
    \item[2.)] assumption \ref{assump: confoundness}, \ref{assump: ER}, \ref{assump: positivity}, and \ref{assump: monotone conditioned CATE} (b) hold, or
    \item[3.)] assumption \ref{assump: confoundness}, \ref{assump: RER} (a), \ref{assump: positivity}, and \ref{assump: monotone conditioned CATE} (b) hold, or
    \item[4.)] assumption \ref{assump: confoundness}, \ref{assump: RER} (b), \ref{assump: positivity}, and \ref{assump: monotone conditioned CATE} (a) hold. 
\end{itemize}
Additionally, $\tau (x) = \mu_1 (x) = \mu_2 (x)$ if assumption \ref{assump: confoundness}, \ref{assump: ER}, \ref{assump: positivity} hold and $h(0,0 ,x) = h(1,0, x) = h(0,1, x) = h(1,1, x)$. 
\end{theorem}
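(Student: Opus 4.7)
The plan is to mirror the blueprint of Theorem \ref{thm: bound tau}, with every step carried out within the stratum $X = x$. First, applying consistency together with Assumptions \ref{assump: confoundness} (unconfoundedness), \ref{assump: ER} (ER), and iterated expectations gives the stratum-wise identity
\begin{align*}
\mathbb{E}[Y \mid Z_1 = a,\, Z_2 = b,\, X = x] - \mathbb{E}[Y(a) \mid X = x] = h(a,b,x),
\end{align*}
with $f_a$ recentered at the stratum mean $\mathbb{E}[Y(a) \mid X = x]$, as is natural in the conditional setting. Plugging this identity into $\mu_1(x)$ produces $\mu_1(x) - \tau(x) = h(1,0,x) - h(0,1,x)$. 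For $\mu_2(x)$, I apply the tower rule on $Z_2$ inside the stratum $X = x$, which collapses the IPW expression to $\mathbb{E}[Z_1 Y / \mathbb{E}[Z_1 \mid Z_2, X] \mid X = x] = \sum_{b} \mathbb{P}(Z_2 = b \mid X = x)\, \mathbb{E}[Y \mid Z_1 = 1, Z_2 = b, X = x]$ (and analogously for the control arm), so that
\begin{align*}
\mu_2(x) - \tau(x) = \sum_{b \in \{0,1\}} \mathbb{P}(Z_2 = b \mid X = x)\,\bigl[h(1,b,x) - h(0,b,x)\bigr].
\end{align*}

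From these two displays, cases 1 and 2 fall out by reading off signs from Assumption \ref{assump: monotone conditioned}. Under (a), the chain $h(0,0,x) \le h(1,0,x) \le h(0,1,x) \le h(1,1,x)$ forces $h(1,0,x) - h(0,1,x) \le 0$ while every summand in the display for $\mu_2(x) - \tau(x)$ is non-negative, so $\mu_1(x) \le \tau(x) \le \mu_2(x)$; under (b) both inequalities flip. The final point-identification claim is immediate from the same displays: if all four values $h(\cdot,\cdot,x)$ coincide, both biases vanish and $\mu_1(x) = \mu_2(x) = \tau(x)$.

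For cases 3 and 4 under the relaxed exclusion restriction (Assumption \ref{assump: RER}), my plan is to reduce to the previous cases via the $\delta$-shifted surrogate $\tilde Y := Y - Z_2 \delta$, which satisfies strict ER and preserves the CATE. A short tower-rule calculation shows $\tilde \mu_2(x) = \mu_2(x)$, because the $\delta\, \mathbb{P}(Z_2 = 1 \mid X = x)$ contributions produced by the IPW tower appear identically in the treated and control arms and cancel, whereas $\tilde \mu_1(x) = \mu_1(x) + \delta$. Applying cases 1 and 2 to $\tilde Y$ and then translating back with the sign of $\delta$ supplied by Assumption \ref{assump: RER} closes cases 3 and 4. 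The main technical obstacle I expect is precisely this last step: I need to verify that conditional monotonicity transfers cleanly from $h$ to the analogous $\tilde h$ for $\tilde Y$ (the two $h$-functions differ by an $(a,b,x)$-dependent term of the form $\delta\, \mathbb{E}[\mathbb{P}(Z_2 = 1 \mid U, X) \mid Z_1 = a, Z_2 = b, X = x]$), and that the pairings ``RER(a) with monotone(b)'' and ``RER(b) with monotone(a)'' are exactly what orient the resulting shift so that $\min\{\mu_1(x), \mu_2(x)\} \le \tau(x) \le \max\{\mu_1(x), \mu_2(x)\}$ is preserved without an inflating $|\delta|$ slack.
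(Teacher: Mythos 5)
Your treatment of cases 1.) and 2.) and of the point-identification claim is correct and is essentially the paper's own argument (Appendix III): derive the two stratum-wise identities $\mu_1(x)-\tau(x)=h(1,0,x)-h(0,1,x)$ and $\mu_2(x)-\tau(x)=\sum_b \mathbb{P}(Z_2=b\mid X=x)\,[h(1,b,x)-h(0,b,x)]$, then read off signs from Assumption \ref{assump: monotone conditioned}. You are in fact more careful than the paper on two points it glosses over: the recentering of $f_a$ at the stratum mean $\mathbb{E}[Y(a)\mid X=x]$, and the appearance of the conditional weights $\mathbb{P}(Z_2=b\mid X=x)$ rather than the marginal $\mathbb{P}(Z_2=b)$ that the paper carries over from the unconditional proof. (Neither affects the sign argument.) Note also that your sign assignment under (a), namely $\mu_1(x)\le\tau(x)\le\mu_2(x)$, is the one actually implied by the displayed identities; the paper's closing sentences state the reverse assignment, but since the theorem is phrased via $\tau^{\pm}(x)=\max/\min$ this is immaterial to the statement being proved.

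For cases 3.) and 4.) you take a genuinely different route from the paper, and the step you yourself flag as the ``main technical obstacle'' is a real gap, not a formality. Passing to the surrogate $\tilde Y=Y-Z_2\delta$ does restore exact exclusion and does give $\tilde\mu_1(x)=\mu_1(x)+\delta$, $\tilde\mu_2(x)=\mu_2(x)$, $\tilde\tau(x)=\tau(x)$; but invoking cases 1.)--2.) for $\tilde Y$ requires Assumption \ref{assump: monotone conditioned} to hold for the surrogate's $\tilde h$, whereas the hypothesis is stated for the original $h$. The discrepancy $\tilde h(a,b,x)-h(a,b,x)$ contains the $(a,b)$-dependent term $-\delta\bigl(\mathbb{E}[\mathbb{P}(Z_2=1\mid U,X)\mid Z_1=a,Z_2=b,X=x]-\mathbb{P}(Z_2=1\mid X=x)\bigr)$, and there is no reason the chain of inequalities survives this perturbation; so the reduction does not close. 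The paper avoids the issue entirely by never changing the outcome or $h$: it reruns the same derivation for $Y$ under Assumption \ref{assump: RER}, so that the exclusion-restriction step contributes an additive offset (its $\epsilon_1-\epsilon_2=-\delta$) to the $\mu_1(x)$ identity only, while the $\mu_2(x)$ identity is unchanged because the offsets cancel across the two IPW arms; monotonicity is then applied only to the original $h$, and the sign of $\delta$ is used solely to orient the extra offset on $\mu_1(x)$. If you want to complete your cases 3.)--4.), you should follow that route (track the offset directly rather than transferring monotonicity to $\tilde h$); and you should check the sign bookkeeping carefully, since your own suspicion that the pairing of RER(a)/(b) with monotone (a)/(b) is what makes or breaks the bracket is well founded --- with the offset entering $\mu_1(x)$ as $-\delta\cdot(\text{nonnegative factor})$, the two terms in $\mu_1(x)-\tau(x)$ reinforce each other under one pairing and compete under the other.
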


\begin{proof}
    Similar to the proof of Theorem \ref{thm: bound tau},  we first consider cases of 1.) and 2.). For the conditional differential effect, we have
    \begin{align*}
        & \quad \mathbb{E}[Y|Z_1 = 1, Z_2 = 0, X] \\
        & = \mathbb{E} [Z_1 Y(1) + (1 - Z_1) Y(0) |Z_1 = 1, Z_2 = 0,X ] \quad \text{(by the consistency assumption)}   \\
        & = \mathbb{E} [Y(1) | Z_1 = 1, Z_2 = 0, X ] \\
        & = \mathbb{E} \left\{ \mathbb{E}[Y(1) | Z_1 = 1, Z_2 = 0, X, U   ] | Z_1 = 1, Z_2 = 0, X     \right\} \quad \text{(by the law of total expectation)} \\
        & = \mathbb{E} \left\{ \mathbb{E}[Y(1) | Z_2 = 0, X, U   ] | Z_1 = 1, Z_2 = 0, X     \right\} \quad \text{(by assumption \ref{assump: confoundness})} \\
        & = \mathbb{E} \left\{ \mathbb{E}[Y(1) | X, U   ] | Z_1 = 1, Z_2 = 0, X     \right\} \quad \text{(by assumption \ref{assump: ER})} \\
        & = \mathbb{E}[Y(1) | X] + \mathbb{E} \left\{ k_1 (X, U) | Z_1 = 1, Z_2 = 0, X     \right\}. 
    \end{align*}
Similarly, we have 
\begin{align*}
     \mathbb{E}[Y|Z_1 = 0, Z_2 = 1, X] = \mathbb{E}[Y(0)|X] + \mathbb{E} \left\{ k_0 (X, U) | Z_1 = 0, Z_2 = 1, X     \right\}
\end{align*}. 
For the IPW, we have
\begin{align*}
    &\quad \mathbb{E}\left[ \frac{Z_1 Y}{ \mathbb{E} [Z_1 | Z_2, X] } \bigg| X  \right] \\
    &= \mathbb{E}\left[ \frac{Z_1 Y(1)}{ \mathbb{E} [Z_1 | Z_2, X] } \bigg| X  \right] \quad \text{(by the consistency assumption)}   \\
    & =  \mathbb{E} \Big[ \mathbb{E} \{\frac{Z_1 Y(1)}{\mathbb{E}[Z_1 | Z_2, X] } | Z_2, X, U   \}  \bigg| X    \Big] \quad \text{(by the law of total expectation)} \\
    & = \mathbb{E} \Big[ \frac{1}{\mathbb{E}[Z_1 | Z_2, X]} \mathbb{E} \{ Z_1 | Z_2, X, U   \} \mathbb{E} \{ Y(1) | Z_2, X, U   \}   \big| X  \Big] \quad \text{(by assumption \ref{assump: confoundness})} \\
    & = \mathbb{E} \Big[ \frac{1}{\mathbb{E}[Z_1 | Z_2, X]} \mathbb{E} \{ Z_1 | Z_2, X, U   \} \mathbb{E} \{ Y(1) | X, U   \}    \big| X \Big] \quad \text{(by assumption \ref{assump: ER})} \\
    & = \mathbb{E} \Big[ \frac{1}{\mathbb{E}[Z_1 | Z_2, X]} \mathbb{E} \{ Z_1 | Z_2, X, U   \} \{ \mathbb{E}[Y(1)|X] + k_1 (X, U)   \} \big| X  \Big]  \\
    & = \mathbb{E} \Big[ \mathbb{E} \{ \frac{1}{\mathbb{E}[Z_1 | Z_2, X]} Z_1 \{ \mathbb{E}[Y(1)|X] + k_1 (X, U)   \} | Z_2, X, U \} \big| X \Big] \quad \text{(by the law of total expectation)} \\
    & = \mathbb{E} \Big[  \frac{1}{\mathbb{E}[Z_1 | Z_2, X]} Z_1 \{ \mathbb{E}[Y(1)|X] + k_1 (X, U)   \} \big| X \Big] \\
    & = \mathbb{E}[Y(1) | X] + \mathbb{E} \Big[ \frac{1}{\mathbb{E}[Z_1|Z_2, X] } Z_1 k_1 (X,U) \big| X \Big] \\
    & = \mathbb{E}[Y(1)|X] + \mathbb{E} \Big\{ \sum_{a \in \{0,1 \} } \sum_{b \in \{0,1\}} \frac{a}{\mathbb{E}[Z_1|Z_2 = b, X] } \mathbb{E}[k_1(X,U)|Z_1 = a, Z_2 = b, X]  \mathbb{P}(Z_1 = a, Z_2 = b | X) \mid X \Big\} \\
    & = \mathbb{E}[Y(1)|X] + \mathbb{E} \Big\{  \sum_{b \in \{0,1\}} \frac{1}{\mathbb{E}[Z_1|Z_2 = b, X] } \mathbb{E}[k_1(X,U)|Z_1 = 1, Z_2 = b, X]  \mathbb{P}(Z_1 = 1, Z_2 = b | X) \mid X \Big\} \\
        & = \mathbb{E}[Y(1)|X] + \mathbb{E} \Big\{  \sum_{b \in \{0,1\}} \frac{1}{\mathbb{E}[Z_1|Z_2 = b, X] } \mathbb{E}[k_1(X,U)|Z_1 = 1, Z_2 = b, X]  \mathbb{P}(Z_1 = 1 \mid Z_2 = b,  X) \mathbb{P} (Z_2 = b \mid X) \mid X \Big\} \\
    & = \mathbb{E}[Y(1)|X] +  \sum_{b \in \{0,1\}} \mathbb{E} \Big\{  \mathbb{E}[k_1(X,U)|Z_1 = 1, Z_2 = b, X]  \mathbb{P}(Z_2 = b \mid X) \mid X \Big\}. 
\end{align*}
Similarly, we have
\begin{align*}
    \mathbb{E}\left[\frac{(1 -Z_1) Y}{\mathbb{E} [1 - Z_1|Z_2,X] } \bigg| X\right] =  \mathbb{E}[Y(0) | X] +   \sum_{b \in \{0,1\}} \mathbb{E} \Big\{  \mathbb{E}[k_0(X,U)|Z_1 = 0, Z_2 = b, X]  \mathbb{P}(Z_2 = b \mid X) \mid X \Big\}
\end{align*}
Hence, 
\begin{align*}
    & \qquad \tau (x) + \mathbb{E}[k_1(X,U)|Z_1 = 1, Z_2 = 0, X = x] - \mathbb{E}[k_0(X,U)|Z_1 = 0, Z_2 = 1, X = x] \\
& = \mathbb{E}[Y|Z_1 = 1, Z_2 = 0, X = x] -  \mathbb{E}[Y|Z_1 = 0, Z_2 = 1, X = x] = \mu_1 (x),
\end{align*}
and
\begin{align*}
    & \qquad \tau(x) + \sum_{b \in \{0,1\}} \mathbb{E} \Big\{  \left(\mathbb{E}[k_1(X,U)|Z_1 = 1, Z_2 = b, X] -   \mathbb{E}[k_0(X,U)|Z_1 = 0, Z_2 = b, X] \right) \mathbb{P}(Z_2 = b \mid X) 
\mid X \Big\} \\
    &= \mathbb{E}\left[\frac{Z_1 Y}{\mathbb{E} [Z_1|Z_2,X] }\big| X = x\right] - \mathbb{E}\left[\frac{(1 -Z_1) Y}{\mathbb{E} [1 - Z_1|Z_2,X] }\big| X=x\right] = \mu_2(x).
\end{align*}
It is clear to see that $\tau(x) = \mu_1(x) = \mu_2(x)$ if $l(0,0,x) = l(1,0,x) = l(0,1,x) = l(1,1,x)$. $\mu_1(x) \leq \tau(x) \leq \mu_2(x)$ if $l(0,0,x) \leq l(1,0,x) \leq l(0,1,x) \leq l(1,1,x)$. And $\mu_2(x) \leq \tau(x) \leq \mu_1(x)$ if $l(0,0,x) \geq l(1,0,x) \geq l(0,1,x) \geq l(1,1,x)$. For cases 3.) and 4.), Let
\begin{align*}
    \mathbb{E}[Y(z_1) | Z_2 = 1, X, U] + \epsilon_1 =  \mathbb{E}[Y(z_1) | Z_2 = 0, X, U] + \epsilon_2 = \mathbb{E}[Y(z_1) |  X, U].
\end{align*}
Then by similar algebra as before, we have 
\begin{align*}
    & \qquad \tau(x) + \mathbb{E}[k_1(X,U)|Z_1 = 1, Z_2 = 0, X =x] - \mathbb{E}[k_0(X,U)|Z_1 = 0, Z_2 = 1, X =x] \\
& = \mathbb{E}[Y|Z_1 = 1, Z_2 = 0, X = x] -  \mathbb{E}[Y|Z_1 = 0, Z_2 = 1, X = x] + (\epsilon_1 - \epsilon_2) = \mu_1 + (\epsilon_1 - \epsilon_2),
\end{align*}
and
\begin{align*}
    & \qquad \tau(x) + \sum_{b \in \{0,1\}} \{ \mathbb{E}[k_1(X,U)|Z_1 = 1, Z_2 = b, X = x] - \mathbb{E}[f_0(X,U)|Z_1 = 0, Z_2 = b, X = x] \} \mathbb{P}(Z_2 = b) \\
    &= \mathbb{E}\left[\frac{Z_1 Y}{\mathbb{E} [Z_1|Z_2,X] } \Big| X = x\right] - \mathbb{E}\left[\frac{(1 -Z_1) Y}{\mathbb{E} [1 - Z_1|Z_2,X] } \Big| X = x\right]  = \mu_2(x).
\end{align*}
It is clear to see that $\mu_1(x) + (\epsilon_1 - \epsilon_2) \leq \tau(x) \leq \mu_2(x)$ if $l(0,0,x) \leq l(1,0,x) \leq l(0,1,x) \leq l(1,1,x)$ and $\epsilon_1 - \epsilon_2 \leq 0$ (assumption \ref{assump: RER} (b)). And $\mu_2(x) \leq \tau(x) \leq \mu_1(x) + (\epsilon_1 - \epsilon_2)$ if $l(0,0,x) \geq l(1,0,x) \geq l(0,1,x) \geq l(1,1,x)$ and $\epsilon_1 - \epsilon_2 \geq 0$ (assumption \ref{assump: RER} (a)).
\end{proof}

\subsection*{Conditional Monotonicity Assumption Under Logistic Mixed Effect Model}

By equation \eqref{equ:rasch model}, we have

\begin{align*}
 \mathbb{P}(Z_1 = 0, Z_2 = 0 | U = u, X = x) &=   \frac{1 }{1 + \exp\{\alpha_1 u + \beta_1 x \} + \exp\{\alpha_2 u + \beta_2 x \} + \exp\{\alpha_1 u + \beta_1 x + \alpha_2 u + \beta_2 x \}}, \\
\mathbb{P}(Z_1 = 1, Z_2 = 0 | U = u, X = x) &= \frac{\exp \{ \alpha_1 u + \beta_1 x  \} }{1 + \exp\{\alpha_1 u + \beta_1 x \} + \exp\{\alpha_2 u + \beta_2 x \} + \exp\{\alpha_1 u + \beta_1 x + \alpha_2 u + \beta_2 x \}}, \\
       \mathbb{P}(Z_1 = 0, Z_2 = 1 | U = u, X = x) &= \frac{\exp \{   \alpha_2 u + \beta_2 x \} }{1 + \exp\{\alpha_1 u + \beta_1 x \} + \exp\{\alpha_2 u + \beta_2 x \} + \exp\{\alpha_1 u + \beta_1 x + \alpha_2 u + \beta_2 x \}}, \\
      \mathbb{P}(Z_1 = 1, Z_2 = 1 | U = u, X = x) &= \frac{\exp \{ \alpha_1 u + \beta_1 x +  \alpha_2 u + \beta_2 x \} }{1 + \exp\{\alpha_1 u + \beta_1 x \} + \exp\{\alpha_2 u + \beta_2 x \} + \exp\{\alpha_1 u + \beta_1 x + \alpha_2 u + \beta_2 x \}}.
\end{align*}
Since 
\begin{align*}
    & \quad \frac{1 }{1 + \exp\{\alpha_1 u + \beta_1 x \} + \exp\{\alpha_2 u + \beta_2 x \} + \exp\{\alpha_1 u + \beta_1 x + \alpha_2 u + \beta_2 x \}} \\
    & = \frac{1}{(1 + \exp\{\alpha_1 u \}) (1 + \exp\{\alpha_2 u \})  } \frac{1}{(1 + \exp\{\beta_1 x \}) (1 + \exp\{\beta_2 x \})  },
\end{align*}
by some simple calculations, we then have 

\begin{align*}
     h(0, 0, x) &=  \frac{ \int_U f_0 (x, u) q(u)  g (u|x)  du  }{  \int_U q(u)  g (u|x)  du  }\\
     h(1, 0, x) &=  \frac{ \int_U f_1 (x, u) \exp \{   \alpha_1 u  \}  q(u)  g (u|x)  du  }{  \int_U \exp \{   \alpha_1 u  \}  q(u)  g (u|x)  du  }\\
    h(0, 1, x) &= \frac{ \int_U f_0 (x, u) \exp \{   \alpha_2 u  \}  q(u)  g (u|x)  du  }{  \int_U \exp \{   \alpha_2 u  \}  q(u)  g (u|x)  du  } \\
    h(1, 1, x) &= \frac{ \int_U f_1 (x, u) \exp \{\alpha_1 u + \alpha_2 u  \} q(u) g (u|x) du  }{  \int_U \exp \{\alpha_1 u  + \alpha_2 u  \} q(u)  g (u|x) du  } 
\end{align*}
with $q(u) = 1/[(1 + \exp\{\alpha_1 u \}) (1 + \exp\{\alpha_2 u \})]  $, $g(u|x)$ being density functions of $U$ given $X$. Then $h(0,0,x) \leq h(1,0,x) \leq h(0,1,x) \leq h(1,1,x)$ if $\alpha_2 \geq \alpha_1 > 0$, $\mathbb{E}[Y(a)|X = x, U = u] - \mathbb{E}[Y(a)]$ is non-decreasing in $u$ and $a$ given $x$. And $h(0,0,x) \geq h(1,0,x) \geq h(0,1,x) \geq h(1,1,x)$ if $ \alpha_2 \leq \alpha_1 < 0 $ and $\mathbb{E}[Y(a)|X = x, U = u]- \mathbb{E}[Y(a)]$ is non-increasing in $u$ and $a$ given $x$. 

\section*{Appendix IV: Inference on the ATE}

We also have the following assumption regarding the estimated propensity score. 

\begin{assumption}[Propensity Score Model] \label{assump: PSM}
The estimator of the true propensity score model satisfies $\sup_{X \in \mathcal{X}, Z_2 \in \{0,1\}} |\mathbb{P}(Z_1 = 1| Z_2, X) - \hat{\mathbb{P}}_n (Z_1 = 1| Z_2, X)  | = O_p (n^{-1/2}) $. 
\end{assumption}
Assumption \ref{assump: PSM} is usually satisfied by standard parametric estimation methods under mild regularity conditions. For instance, a logit or probit model that relies on a linear index and is estimated through maximum likelihood will meet assumption \ref{assump: PSM} if $\mathcal{X}$ is bounded.

Denote $\sigma_1^2 = Var(\hat \mu_1)$, $\sigma_2^2 = Var( \hat \mu_2 )$. The following empirical sandwich estimators will be used to obtain the asymptotic variances: 
\begin{align*}
    \hat \sigma_1^2 &= \frac{1}{n(n-1)}\sum_{i=1}^n \left( \hat{\mu}_1(X_i) - \hat{\mu}_1 \right)^2, \\
        \hat \sigma_2^2 &= \frac{1}{n^2} \sum_{i=1}^n \Bigg\{ \frac{Z_{i1} Y_i }{\mathbb{\hat P}_n (Z_{i1}=1|Z_{i2},X) }  - \frac{(1 - Z_{i1}) Y_i }{1-\mathbb{\hat P}_n (Z_{i1}=1|Z_{i2},X) } \\ 
    &- \frac{1}{n} \sum_{i'=1}^n \frac{Z_{i'1} Y_{i'} }{ \mathbb{\hat P}_n (Z_{i'1} = 1|Z_{i'2}, X_{i'})     } + \frac{1}{n} \sum_{i'=1}^n \frac{(1 - Z_{i'1}) Y_{i'} }{1 - \mathbb{\hat P}_n (Z_{i'1} = 1|Z_{i'2}, X_{i'})  }     \Bigg\}^2.
\end{align*}

We implement a two-step approach by \citet{romano2014practical} to construct a confidence interval for $\tau$. Specifically, suppose $\mu_1 \leq \tau \leq \mu_2$, then in the first step, we create a confidence region for the moments $ \mathbb{E} (\hat \mu_1 - \tau)$ and $\mathbb{E}(\tau - \hat \mu_2)$, and use this set to provide information such that the moments are non-positive in the second step. We will consider tests of the null hypothesis
\begin{align*}
    H_0: \mathbb{E} (\hat \mu_1 - \tau ) \leq 0, \mathbb{E}(\tau - \hat \mu_2) \leq 0
\end{align*}
for each $\tau \in \mathbb{R}$ that control the probability of Type I error at level $\alpha$. We list the algorithm below tailored for our problem.

\textbf{Algorithm 2. (Confidence Interval for ATE)}
\begin{itemize}
    \item[1.] Create $L$ bootstrap samplers using $\bm W_i = \{Y_i, Z_{i1}, Z_{i2}, \mathbf{X}_{i} \}_{i=1}^n$, and each sampler has size $n$.   Set $W_{1j} = (\hat \mu_1 - \tau)_j$ and $W_{2j} = (\tau - \hat \mu_2)_j$, and each one is a realization from $j$-th sampler.
    \item[2.] Compute the bootstrap quantile $B_n^{-1} (1 - \beta)$ where
    \begin{align*}
        B_n(x) = \mathbb{P} \Big\{ \max \{\frac{\sqrt{n} ( W_{1j} - \bar W_1  )   }{\hat \sigma_{1} },  \frac{\sqrt{n} ( W_{2l} - \bar W_2  )   }{\hat \sigma_{2} }  \} \leq x   \Big\},
    \end{align*}
    with $\bar W_1 = \sum_{j=1}^L W_{1j}/L, \bar W_2 = \sum_{j=1}^L W_{2j}/L$. 
    \item[3.] Compute $M_n (1 - \beta)$ where
    \begin{align*}
        M_n (1 - \beta) = \Big\{ (W_1, W_2 ) \in \mathbb{R}^2, \max \{\frac{\sqrt{n} ( W_{1} - \bar W_1  )   }{\hat \sigma_{1} },  \frac{\sqrt{n} ( W_{2} - \bar W_2  )   }{\hat \sigma_{2} } \} \leq B_n^{-1} (1 - \beta)  \Big\}. 
    \end{align*}
    \item[4.] Compute $\lambda^*_1, \lambda^*_2$ where
    \begin{align*}
        \lambda^*_1 = \min \{ \bar W_1 + \frac{\hat \sigma_{1} B_n^{-1} (1 - \beta) }{\sqrt{n}}, 0  \}, \quad\lambda^*_2 = \min \{ \bar W_2 + \frac{\hat \sigma_{2} B_n^{-1} (1 - \beta) }{\sqrt{n}}, 0  \}. 
    \end{align*}
    \item[5.] Compute $\hat c_n (1 - \alpha + \beta) = J_n^{-1} (1 - \alpha + \beta, \lambda^*_1, \lambda^*_2)$ where
    \begin{align*}
        J_n (x, \lambda_1, \lambda_2) =  \mathbb{P} \Big\{ \max \{\frac{\sqrt{n} (  \bar W_1 - W_{1j} + \lambda_1  )   }{\hat \sigma_{1} },  \frac{\sqrt{n} ( \bar W_2 -  W_{2j} + \lambda_2 )   }{\hat \sigma_{2} }  \} \leq x   \Big\}. 
    \end{align*}
    \item[6.] Compute 
    \begin{align*}
        \phi_n (\tau) = 1 - \mathbf{1} \Big\{  \{ M_n (1 - \beta) \subseteq \mathbb{R}^2   \} \cup  \{ T_n \leq \hat c_n (1 - \alpha + \beta) \}   \Big\} 
    \end{align*}
    where 
    \begin{align*}
        T_n = \max \{\frac{\sqrt{n} \bar W_1   }{\hat \sigma_{1} },  \frac{\sqrt{n} \bar W_2  }{\hat \sigma_{2} }  \}. 
    \end{align*}
    \item[7.] Obtain the confidence region by
    \begin{align*}
        c_n = \{ \tau \in \mathbb{R}: \phi_n (\tau) = 0  \}. 
    \end{align*}
\end{itemize}

The similar strategy can be applied when $\mu_2 \leq \tau \leq \mu_1$. As suggested by \citet{romano2014practical}, we use $\beta = 0.005$ for constructing the confidence region. Using larger values of $\beta$ generally results in reduced average power. Lower values of $\beta$ do not significantly affect average power but require a larger number of bootstrap samples in the first step. In the following corollary, we show that the confidence region obtained from the above algorithm has the correct level as long as the uniform integrability condition holds. 

\begin{corollary}[Coverage of the ATE]
    Let $W_{1j} = (\hat \mu_1 - \tau)_j$ and $W_{2j} = (\tau - \hat \mu_2)_j$ defined in Algorithm 2, and $\tilde W_1 = \mathbb{E} (\hat \mu_1 - \tau), \tilde W_2 = \mathbb{E} (\tau - \hat \mu_2)$. Suppose $\mathbf{P}$ is such that
    \begin{align*}
        \lim_{\lambda \rightarrow \infty} \sup_{P_1 \in \mathbf{P}} \sup_{\tau \in \mathbb{R}} \mathbb{E}_{P_1}  \Bigg\{ \left( \frac{W_{1j} - \tilde W_1}{ \sigma_1}    \right)^2 \mathbf{1} \left\{  \big\lvert \frac{W_{1j} - \tilde W_1}{ \sigma_1}   \big\lvert > \lambda  \right\} \Bigg\} = 0, \\
                \lim_{\lambda \rightarrow \infty} \sup_{P_2 \in \mathbf{P}} \sup_{\tau \in \mathbb{R}} \mathbb{E}_{P_2} \Bigg\{ \left( \frac{W_{2j} - \tilde W_2}{ \sigma_2}    \right)^2 \mathbf{1} \left\{  \big\lvert \frac{W_{2j} - \tilde W_2}{ \sigma_2}   \big\lvert > \lambda  \right\} \Bigg\} = 0. 
    \end{align*}
    Then $c_n$ obtained from Algorithm 2 with $T_n$ satisfies $\lim_{n \rightarrow \infty} \inf \inf_{(P_1, P_2) \in \mathbf{P}^2} \inf_{\tau \in \mathbb{R}} \mathbb{P} (\tau \in c_n) \geq 1 - \alpha  $. 
\end{corollary}

This result is a direct implication from Theorem 3.1 in \citet{romano2014practical}. 

\section*{Appendix V: Asymptotic Properties of $\hat \mu_1 (x)$ and $\hat \mu_2 (x)$}

   To state the asymptotic properties of the proposed semi-parametric estimators, the following assumption regarding bandwidths is needed. Note that our asymptotic results are conditional on one-dimensional observed covariate $X$ (i.e. $l = 1$), which can be extended to the multi-dimensional case in general.

    \begin{assumption}[Bandwidths] \label{assump: Bandwidth}
The bandwidths $h_1, h_2$ and $h_3$ satisfies the following conditions as $n \rightarrow \infty$: (i) $n_{10} h_1^{2s_1 + l} \rightarrow 0$ and $n_{10} h_1^l \rightarrow \infty$; (ii) $n_{01} h_2^{2s_2 + l} \rightarrow 0$ and $n_{01} h_2^l \rightarrow \infty$; (iii) $n h_3^{2s_3 + l} \rightarrow 0$ and $n h_3^l \rightarrow \infty$. 
\end{assumption}

Assumption \ref{assump: Bandwidth} ensures the asymptotic negligibility of the bias term for kernel regression. The regularity conditions can be stated below:
\begin{itemize}
    \item[(A.1)] $\int |K_1(u)|^{2+\eta_1}du < \infty, \int |K_2(u)|^{2+\eta_2}du < \infty$ and $\int |K_3(u)|^{2+\eta_3}du < \infty $ for some $\eta_1, \eta_2, \eta_3 > 0$. 
    \item[(A.2)] $h_1 \sim n_{10}^{-1/5}, h_2 \sim n_{01}^{-1/5}$ and $h_3 \sim n^{-1/5}$. 
    \item[(A.3)] $m_{10}(x),m_{01}(x),\mu_1 (x), \mu_2(x), f_{10}(x), f_{01}(x)$ and $f(x)$ are twice differentiable. 
    \item[(A.4)] $x_*$ is a continuity point of $\sigma_1^2(x), \sigma_2^2(x), \mathbb{E}\left[|Y|^{2+\eta_1}|Z_1 = 1, Z_2 = 0, X=x\right], \\ \mathbb{E}\left[|Y|^{2+\eta_1}|Z_1 = 0, Z_2 = 1,X=x\right]  $, and $f(x)$. 
\end{itemize}

Then we have the following lemma of the asymptotic normality of $\hat m_{10} (x), \hat m_{01} (x)$. 

\begin{lemma} \label{lem:m(x)}
 Under regularity conditions A.1, A.2, A.3, and A.4, the following hold for each fixed $x_* \in \mathbb{R}$:
\begin{itemize}
    \item[(a.)] $\sqrt{n h_1^{l}}  \frac{\hat m_{10} (x_*) - m_{10} (x_*)}{V_{10}^{1/2}(x_*)} \overset{d}{\to} \mathcal{N} ( B_{10}(x_*), 1  ) $.
    \item[(b.)] $\sqrt{n h_2^{l}} \frac{\hat m_{01} (x_*) - m_{01} (x_*)}{V_{01}^{1/2}(x_*)}\overset{d}{\to} \mathcal{N} ( B_{01}(x_*), 1  ) $.
\end{itemize}
 where
    \begin{align*}
        V_{10}(x) &= \sigma^2_{10}(x) \|K_1 \|_2^2/f_{10}(x),  \\
        V_{01}(x) &= \sigma^2_{01}(x) \|K_2 \|_2^2/f_{01}(x),  \\
        B_{10}(x) & = \left(\int u^2 K_1(u) du \right) \left( \frac{d^2 m_{10}}{ d x^2}  (x) + 2  \frac{d m_{10}}{ d x}  (x) \frac{df_{10}}{dx}(x)/f_{10}(x) \right) \\
                B_{01}(x) & = \left(\int u^2 K_2(u) du \right) \left( \frac{d^2 m_{01}}{ d x^2}  (x) + 2  \frac{d m_{01}}{ d x}  (x) \frac{df_{10}}{dx}(x)/f_{01}(x) \right)
    \end{align*}
    with $\sigma^2_{10}(x) = \mathbb{E} \Big[ \left(\hat m_{10} (X) - m_{10} (X) \right)^2 |X = x  \Big] $, $\sigma^2_{01}(x) = \mathbb{E} \Big[ \left(\hat m_{01} (X) - m_{01} (X) \right)^2 |X = x  \Big] $, and $\|K \|_2^2 = \int K^2(u)du$.

\end{lemma}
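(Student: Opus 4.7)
The plan is to prove part (a); part (b) follows by an identical argument with the roles of $(Z_1,Z_2)$ swapped from $(1,0)$ to $(0,1)$. Conditioning on $\{i \in \mathcal{I}_1\}$, the subsample $\{(X_i, Y_i) : i \in \mathcal{I}_1\}$ is i.i.d.\ with $X_i$ having density $f_{10}$ and $\mathbb{E}[Y_i \mid X_i = x] = m_{10}(x)$. This reduces the problem to the classical Nadaraya--Watson asymptotic theory on the conditional sample of (random) size $n_{10}$, and then transferring the rate from $\sqrt{n_{10} h_1^l}$ to $\sqrt{n h_1^l}$ via $n_{10}/n \overset{p}{\to} \mathbb{P}(Z_1=1, Z_2=0)$.

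The first step is the standard numerator--denominator decomposition. Writing $\hat m_{10}(x) = \hat r(x)/\hat f_{10}(x)$ with $\hat r(x) = (n_{10} h_1^l)^{-1} \sum_{i \in \mathcal{I}_1} K_1((x-X_i)/h_1) Y_i$ and $\hat f_{10}(x)$ the ordinary kernel density estimator, I decompose
$$\hat m_{10}(x_*) - m_{10}(x_*) = \frac{[\hat r(x_*) - \mathbb{E}\hat r(x_*)] - m_{10}(x_*)[\hat f_{10}(x_*) - \mathbb{E}\hat f_{10}(x_*)]}{\hat f_{10}(x_*)} + \frac{\mathbb{E}\hat r(x_*) - m_{10}(x_*)\mathbb{E}\hat f_{10}(x_*)}{\hat f_{10}(x_*)}.$$
The second term is a pure bias; the first is mean zero and handled by a central limit theorem.

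For the bias, I Taylor-expand $y \mapsto m_{10}(y) f_{10}(y)$ and $y \mapsto f_{10}(y)$ to second order around $x_*$ under condition (A.3), substitute $y = x_* - h_1 u$, and use $\int K_1 = 1$, $\int u K_1(u) du = 0$, and $\int u^2 K_1(u) du$ finite. The leading term in $\mathbb{E}\hat r(x_*)$ is $(h_1^2/2)(\int u^2 K_1)(m_{10} f_{10})''(x_*)$ and in $\mathbb{E}\hat f_{10}(x_*)$ is $(h_1^2/2)(\int u^2 K_1) f_{10}''(x_*)$. The key bookkeeping is expanding $(m_{10} f_{10})'' = m_{10}'' f_{10} + 2 m_{10}' f_{10}' + m_{10} f_{10}''$ so that the $m_{10} f_{10}''$ piece cancels against $m_{10}(x_*) \mathbb{E} \hat f_{10}(x_*)$. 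Dividing by $f_{10}(x_*)$ (the limit of $\hat f_{10}(x_*)$) leaves $(h_1^2/2) B_{10}(x_*) + o(h_1^2)$. Under (A.2), $\sqrt{n h_1^l}\cdot h_1^2$ is $O(1)$, so this produces the noncentrality $B_{10}(x_*)$ in the stated $V_{10}^{1/2}$-normalized limit.

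For the variance, I apply Lyapunov's CLT to the i.i.d.\ triangular array
$$S_n = \frac{1}{\sqrt{n_{10} h_1^l}} \sum_{i \in \mathcal{I}_1} \bigl\{ K_1((x_* - X_i)/h_1)[Y_i - m_{10}(x_*)] - \mathbb{E}[\cdot]\bigr\}.$$
A change of variables plus continuity at $x_*$ (A.4) yields $\mathrm{Var}(S_n) \to \sigma_{10}^2(x_*) f_{10}(x_*) \|K_1\|_2^2$. Lyapunov's $(2+\eta_1)$-moment condition follows from (A.1) combined with the divergence $n_{10} h_1^l \to \infty$ from Assumption \ref{assump: Bandwidth}. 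A standard LLN gives $\hat f_{10}(x_*) \overset{p}{\to} f_{10}(x_*)$, and Slutsky's theorem combines the pieces to yield $\sqrt{n_{10} h_1^l}(\hat m_{10}(x_*) - m_{10}(x_*)) \overset{d}{\to} \mathcal{N}(\tilde B_{10}(x_*), \sigma_{10}^2(x_*)\|K_1\|_2^2/f_{10}(x_*))$, and multiplying by $\sqrt{n/n_{10}}$ absorbs the $p_{10}$ factor into the stated normalization.

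The main obstacle is the algebraic cancellation that turns $(m_{10} f_{10})''$ into the combination $m_{10}'' f_{10} + 2 m_{10}' f_{10}'$ appearing in $B_{10}$; once that is in hand, both the CLT piece and the Slutsky passage are routine. A secondary subtlety is the random sample size $n_{10}$, which is handled by conditioning on $n_{10}$ and invoking $n_{10}/n \overset{a.s.}{\to} \mathbb{P}(Z_1=1, Z_2=0) > 0$ together with the continuous mapping theorem, so that rescaling the rate from $\sqrt{n_{10} h_1^l}$ to $\sqrt{n h_1^l}$ is harmless.
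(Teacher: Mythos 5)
Your proposal is correct and follows essentially the same route as the paper's proof: the standard Nadaraya--Watson linearization $\hat m_{10}-m_{10}\approx(\hat r-m_{10}\hat f_{10})/f_{10}$, a second-order Taylor expansion of the bias in which the $m_{10}f_{10}''$ term cancels to leave $B_{10}$, and a Lyapunov CLT plus Slutsky for the stochastic part. The only substantive difference is that you explicitly condition on the subsample $\mathcal{I}_1$ and track the passage from the $\sqrt{n_{10}h_1^l}$ rate to the $\sqrt{n h_1^l}$ rate via $n_{10}/n\to\mathbb{P}(Z_1=1,Z_2=0)$, a point the paper's proof leaves implicit.
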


\begin{proof}[Proof of Lemma \ref{lem:m(x)}]
Define $\hat r_{10} = \hat m_{10} (x) \hat f_{10} (x) = (n_{10} h_1)^{-1} \sum_{i \in \mathcal{I}_1} K_1 (\frac{x - X_i}{h_1}) Y_i,  \hat r_{01} = \hat m_{01} (x) \hat f_{01} (x) = n_{01}^{-1} \sum_{i \in \mathcal{I}_2} K_1 (x-X_i) Y_i$. The differences $\hat m_{10} (x) - m_{10}, \hat m_{01} (x) - m_{01}$
and their linearization
\begin{align*}
    &\frac{\hat r_{10}(x) - m_{10}(x) \hat f_{10}(x) }{f_{10}(x)}, \\ 
    &\frac{\hat r_{01}(x) - m_{01}(x) \hat f_{01}(x) }{f_{01}(x)}
\end{align*}
are of order $o_p \left( (n_{10}h_1)^{-1/2} \right)$ and $o_p \left( (n_{01}h_1)^{-1/2} \right)$ which can be shown through writing the differences as
\begin{align*}
      &\left( \frac{\hat r_{10}(x) - m_{10}(x) \hat f_{10}(x) }{f_{10}(x)} \right) \left( \frac{f_{10}(x)}{\hat f_{10}(x)} - 1 \right), \\
           &\left( \frac{\hat r_{01}(x) - m_{01}(x) \hat f_{01}(x) }{f_{01}(x)} \right) \left( \frac{f_{01}(x)}{\hat f_{01}(x)} - 1 \right)
\end{align*}
and combining terms. The bias can then be written as
\begin{align*}
    \frac{\mathbb{E} \hat r_{10}(x) - m_{10} \mathbb{E} \hat f_{10}(x) }{f_{10}(x)} \qquad \text{and} \qquad \frac{\mathbb{E} \hat r_{01}(x) - m_{01} \mathbb{E} \hat f_{01}(x) }{f_{01}(x)}
\end{align*}
which approximately equal to $ m^*_{10} (x) - m_{10} (x)$ and $m^*_{01} (x) - m_{01} (x)$ where $m_{10}^* = \mathbb{E}\hat r_{10} (x)/\mathbb{E} \hat f_{10}(x), m_{01}^* = \mathbb{E}\hat r_{01} (x)/\mathbb{E} \hat f_{01}(x) $. We observe that
\begin{align*}
    m^*_{10} (x) - m_{10} (x)& = (\mathbb{E}\{ K_1 (x-X) \})^{-1} \Bigg\{ \int K_1 (x - u )m_{10}(u) f_{10}(u) du - m_{10}(x) f_{10}(x) \\
    &+ m_{10} (x) f_{10}(x) - m_{10}(x) \int K_1(x-u) f_{10}(u) du  \Bigg\} \\
    &\approx \frac{h_1^2}{2} \int u^2 K_1(u) du \left(f_{10}(x)\right)^{-1} \left( \frac{d^2 (m_{10} f_{10}) }{du^2} (x) - m_{10}(x) \frac{d^2 f}{du}(x) \right) \\
    & = \frac{h_1^2}{2} \int u^2 K_1(u) du \left( \frac{d^2 m_{10}}{du^2} (x) + 2 \frac{d m_{10}}{du} (x) \frac{d f_{10}}{du} (x) / f_{10} (x)     \right) 
\end{align*}
and
\begin{align*}
    m^*_{01} (x) - m_{01} (x)& = (\mathbb{E}\{ K_2 (x-X) \})^{-1} \Bigg\{ \int K_2 (x - u )m_{01}(u) f_{01}(u) du - m_{01}(x) f_{01}(x) \\
    &+ m_{01} (x) f_{01}(x) - m_{01}(x) \int K_2(x-u) f_{01}(u) du  \Bigg\} \\
    &\approx \frac{h_2^2}{2} \int u^2 K_2(u) du \left(f_{01}(x)\right)^{-1} \left( \frac{d^2 (m_{01} f_{01}) }{du^2} (x) - m_{01}(x) \frac{d^2 f}{du}(x) \right) \\
    & = \frac{h_2^2}{2} \int u^2 K_2(u) du \left( \frac{d^2 m_{01}}{du^2} (x) + 2 \frac{d m_{01}}{du} (x) \frac{d f_{01}}{du} (x) / f_{01} (x)     \right). 
\end{align*}
Note that
\begin{align*}
    \hat m_{10} (x) - m_{10}^* (x) &=  \underbrace{\left( \hat r_{10}(x)/f_{10}(x) -  \hat f_{10}(x) m_{10}^*(x) /f_{10}(x)    \right)}_{H_{10} (x) } \left( f_{10}(x)/\hat f_{10}(x)  \right) \\
        \hat m_{01} (x) - m_{01}^* (x) &= \underbrace{\left( \hat r_{01}(x)/f_{01}(x) -  \hat f_{01}(x) m_{01}^*(x) /f_{01}(x)    \right)}_{H_{01}(x)} \left( f_{01}(x)/\hat f_{01}(x)  \right)
\end{align*}
and $f_{10}(x)/\hat f_{10}(x) \overset{p}{\rightarrow} 1, f_{01}(x)/\hat f_{01}(x) \overset{p}{\rightarrow} 1$, so that $\hat m_{10} (x) - m_{10}^* (x), \hat m_{01} (x) - m_{01}^* (x) $ will have the same asymptotic distributions as $H_{10}(x), H_{01}(x)$. It can be shown that 
\begin{align*}
    &n_{10} h_1 Var\left(H_{10} (x) \right) \rightarrow \left(f_{10}(x)  \right)^{-1} \sigma_{10}(x) \int K_1^2(u) du \\
    & n_{10} Cov\left(H_{10} (x), H_{10}(y)  \right) \rightarrow 0,  \text{as } n_{10} \rightarrow \infty. \\
        &n_{01} h_2 Var\left(H_{01} (x) \right) \rightarrow \left(f_{01}(x)  \right)^{-1} \sigma_{01}(x) \int K_2^2(u) du \\
    & n_{01} Cov\left(H_{01} (x), H_{01}(y)  \right) \rightarrow 0,  \text{as } n_{01} \rightarrow \infty. 
\end{align*}
Applying the central limit theorem then yields the asymptotic normality of 
\begin{align*}
    \sqrt{n h_1^{l}} \frac{\hat m_{10} (x_*) - m_{10} (x_*)}{V_{10}^{1/2}(x_*)} \text{ and } \sqrt{n h_2^{l}} \frac{\hat m_{01} (x_*) - m_{01} (x_*)}{V_{01}^{1/2}(x_*)}. 
\end{align*}
\end{proof}

We then have our asymptotic property for $\hat \mu_1 (x)$ and $\hat \mu_2 (x)$.
 
\begin{theorem} \label{thm: inference tau(x)}
Assume that $h_1 \sim h_2$. Suppose (\ref{def: bound tau hat (x)}) are satisfied for some $s_1, s_2, s_3 \geq 2$.  Under assumption \ref{assump: PSM}, \ref{assump: Bandwidth}, and additional regularity conditions A.1, A.2, A.3, A.4, the following hold for each fixed $x_* \in \mathbb{R}$:

\begin{itemize}
    \item[(1.)] $\sqrt{n } \frac{\hat \mu_1 (x_*) - \mu_1 (x_*)}{\sigma_1 (x_*)}\overset{d}{\to} \mathcal{N} \left( B(x_*)  , 1   \right) $,
    \item[(2.)] $\sqrt{n h_3^{l}} \left( \hat \mu_2 (x_*) - \mu_2 (x_*) \right)  = \frac{1}{\sqrt{n h_3^l}} \frac{1}{f(x_*)} \sum_{i=1}^n \left(\phi(X_i,Y_i,Z_{i1},Z_{i2}) - \mu_2(x_*)  \right) K_3 (\frac{X_i - x_*}{h_3} ) + o_p(1) $,
    \item[(3.)] $\sqrt{n h_3^{l}} \frac{ \hat \mu_2 (x_*) - \mu_2 (x_*)}{\sigma_2 (x_*)} \overset{d}{\to} \mathcal{N} \left(0, 1 \right) $,
\end{itemize}
 where 
 
 $\sigma^2_1 (x) = V_{10} (x) + V_{01} (x), \sigma^2_2(x) = \sigma^2_\phi(x) \|K_3 \|_2^2/f(x), B(x) = B_{10}(x) - B_{01}(x) $. 

\end{theorem}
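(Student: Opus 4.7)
The plan is to prove the three parts sequentially, leveraging Lemma \ref{lem:m(x)} for the first part and standard kernel-IPW arguments for the remaining two.

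For part (1), I decompose $\hat\mu_1(x_*) - \mu_1(x_*) = (\hat m_{10}(x_*) - m_{10}(x_*)) - (\hat m_{01}(x_*) - m_{01}(x_*))$ and note that the two summands are independent because they are computed on the disjoint index sets $\mathcal{I}_1$ and $\mathcal{I}_2$. Lemma \ref{lem:m(x)} gives each summand a Gaussian limit with bias $B_{10}(x_*)$ or $B_{01}(x_*)$ and variance contributions $V_{10}(x_*)$ or $V_{01}(x_*)$. Under the assumption $h_1\sim h_2$ the two normalizations line up, so independence and the convolution of normals produce the stated limit with bias $B(x_*)=B_{10}(x_*)-B_{01}(x_*)$ and variance proportional to $\sigma_1^2(x_*)=V_{10}(x_*)+V_{01}(x_*)$.

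For part (2), I introduce the oracle transformation $\phi(x,y,z_1,z_2)=z_1 y/p(x,z_2)-(1-z_1)y/(1-p(x,z_2))$ with $p(x,z_2)=\mathbb{P}(Z_1=1\mid Z_2=z_2, X=x)$, and note that $\mathbb{E}[\phi(X,Y,Z_1,Z_2)\mid X=x]=\mu_2(x)$. Let $\hat\phi_i$ denote the feasible version using $\mathbb{\hat P}_n$. I then write
\begin{align*}
\hat\mu_2(x_*)-\mu_2(x_*) &= \frac{(nh_3^l)^{-1}\sum_i(\phi_i-\mu_2(x_*))K_3\left(\tfrac{X_i-x_*}{h_3}\right)}{\hat f(x_*)} \\
&\quad + \frac{(nh_3^l)^{-1}\sum_i(\hat\phi_i-\phi_i)K_3\left(\tfrac{X_i-x_*}{h_3}\right)}{\hat f(x_*)}.
\end{align*}
A first-order expansion of $\hat\phi_i-\phi_i$ in the propensity score, combined with Assumption \ref{assump: PSM} giving $\sup_{x,z_2}|\mathbb{\hat P}_n - p|=O_p(n^{-1/2})$ and Assumption \ref{assump: positivity} (keeping denominators bounded), bounds the plug-in error by $O_p(n^{-1/2})=o_p((nh_3^l)^{-1/2})$ under Assumption \ref{assump: Bandwidth}(iii). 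A standard kernel-density argument then replaces $\hat f(x_*)$ by $f(x_*)$ at the same $o_p((nh_3^l)^{-1/2})$ order, delivering the claimed linearization.

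For part (3), I apply a Lyapunov CLT to the triangular array in the linearization of part (2). A change of variables $u=(X-x_*)/h_3$, together with regularity conditions A.1, A.3, A.4, yields
\begin{align*}
\frac{1}{h_3^l}\mathbb{E}\left[(\phi(X,Y,Z_1,Z_2)-\mu_2(x_*))^2 K_3^2\left(\frac{X-x_*}{h_3}\right)\right] \to \sigma_\phi^2(x_*)\,\|K_3\|_2^2\,f(x_*),
\end{align*}
so after dividing by $f(x_*)^2$ the limiting variance is $\sigma_2^2(x_*)$. The Lyapunov condition follows from A.1. The bias contribution is asymptotically negligible because Assumption \ref{assump: Bandwidth}(iii) enforces $nh_3^{2s_3+l}\to 0$ with $s_3\geq 2$, so the limit is centered.

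The main obstacle is the propensity plug-in step in part (2): one must show that replacing $p$ by $\mathbb{\hat P}_n$ introduces only an $o_p((nh_3^l)^{-1/2})$ perturbation into a kernel-weighted sum, which requires combining the parametric $\sqrt{n}$ rate from Assumption \ref{assump: PSM} with uniform control of the $\phi$-derivatives in the propensity (using Assumption \ref{assump: positivity} to avoid zero denominators) and a second-order Taylor remainder bound. The remaining steps---independence across disjoint subsamples in part (1) and the one-dimensional CLT in part (3)---are routine once this plug-in error is dispatched.
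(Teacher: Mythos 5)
Your proposal is correct and takes essentially the same route as the paper's own proof: part (1) by subtracting the two Nadaraya--Watson limits from Lemma \ref{lem:m(x)} over the disjoint subsamples, part (2) by a first-order expansion of $\Phi$ in the propensity score with the parametric $O_p(n^{-1/2})$ rate from Assumption \ref{assump: PSM} rendering the plug-in error negligible at the $\sqrt{nh_3^l}$ scale, and part (3) by a Lyapunov CLT for the resulting triangular array with the bias killed by Assumption \ref{assump: Bandwidth}. The only cosmetic differences are that you make the $\hat f(x_*)\to f(x_*)$ replacement and the limiting-variance computation explicit where the paper cites \citet{ullah1999nonparametric}.
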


\begin{proof}[Proof of Theorem \ref{thm: inference tau(x)}]
(1) is proved through Lemma \ref{lem:m(x)} and the subtraction of two estimators. For (2), let $w = (y,z_1,z_2,x)$ and $\Phi(w, p) = \frac{z_1y}{p} - \frac{(1-z_1)y}{1-p}$ , we first expand the numerator of $\hat \mu_2 (x)$ around $p(X_i,Z_{i2})$ as
\begin{align*}
   &\quad \frac{1}{\sqrt{n h_3}} \sum_{i=1}^n K_3 (\frac{X_{i1} - x }{h_3}  ) \left( \Phi (W_i, \hat p(X_i, Z_{i1}) ) - \mu_2 (x)     \right) \\
    & = \frac{1}{\sqrt{n h_3}} \sum_{i=1}^n K_3 (\frac{X_{i1} - x }{h_3}  ) \left( \Phi (W_i,  p(X_i, Z_{i1}) ) - \mu_2 (x)     \right) + \\ &\frac{1}{\sqrt{n h_3}} \sum_{i=1}^n K_3 (\frac{X_{i1} - x }{h_3}  ) \Phi (W_i,  p^*(X_i, Z_{i1}) ) \left(  \hat p(X_i, Z_{i1})  - p(X_i, Z_{i1}) )     \right)
\end{align*}
where $p^*(X_i, Z_{i1})$ is between $p(X_i, Z_{i1})$ and $\hat p(X_i, Z_{i1})$. We bound the second term as
\begin{align*}
 &\quad    \frac{1}{\sqrt{n h_3}} \Bigg|  \sum_{i=1}^n K_3 (\frac{X_{i1} - x }{h_3}  ) \Phi (W_i,  p^*(X_i, Z_{i1}) ) \left(  \hat p(X_i, Z_{i1})  - p(X_i, Z_{i1}) )     \right)  \Bigg| \\
    & \leq \sqrt{n h_3} \sup_{x \in \mathbb{R}, z_2 \in \mathbb{R}} |\hat p(x,z_2) - p(x,z_2) | \cdot  \frac{1}{\sqrt{n h_3}}  \sum_{i=1}^n K_3 (\frac{X_{i1} - x }{h_3}  ) \Phi (W_i,  p^*(X_i, Z_{i1}) ),
\end{align*}
where the first factor is $o_p(1)$ by assumption \ref{assump: PSM} and the second factor is $O_p(1)$ by regularity conditions A.1, A.2, A.3, and A.4. Since $ \Phi (W_i,  p(X_i, Z_{i1}) ) = \phi(X_i,Y_i,Z_{i1},Z_{i2})$, (2) is proved. For (3), we first write
\begin{align*}
& \quad \sqrt{n h_3} (\hat \mu_2(x_*) - \mu_2(x_*) ) \\
    & = \frac{1}{\sqrt{n h_3}} \frac{1}{f(x_*)} \sum_{i=1}^n \left( \phi(X_i, Y_i, Z_{i1}, Z_{i2}) - \mu_2(X_i^*)  \right) K_3 (\frac{X_i^* - x}{h_3}) \\
    & + \frac{1}{\sqrt{n h_3}} \frac{1}{f(x_*)} \sum_{i=1}^n \left(  \mu_2(X_i^*) -  \mu_2(x_*)   \right) K_3(\frac{X_i^* - x_*}{h_3}) + o_p(1).
\end{align*}
It is easy to show that $\mathbb{E} \{ \left(\phi(X_i, Y_i, Z_{i1}, Z_{i2}) - \mu_2(X_i^*) \right) K_{3in} \} = 0 $ where $K_{3in} = K_3(\frac{X_i^*-x_*}{h_3})$. For each $n$, the random variables $ \{\left(\phi(X_i, Y_i, Z_{i1}, Z_{i2}) - \mu_2(X_i^*) \right) K_{3in}  \}_{i=1}^n$ are independent and one can apply Lyapunov's CLT for triangular arrays to obtain the asymptotic distribution shown in (3). The verification of the conditions of Lyapunov's CLT mimics exactly the proof of Theorem 3.5 by \citet{ullah1999nonparametric}. And the bias term is $o_p(1)$ under assumption \ref{assump: Bandwidth} and therefore has no bearing on the limit distribution. 
\end{proof}

\section*{Appendix VI: Additional Simulation Results and Case Studies Results}

\begin{table}[!htb] 
\centering
\begin{tabular}{  cccccccc  p{5cm} }
\hline
 &  Outcome model & $p = 0.7$ & $p = 0.8$ & $p = 0.9$ & $p = 0.7$ & $p = 0.8$ & $p = 0.9$ \\ 
 \hline
   \multicolumn{7}{c}{\qquad \qquad \qquad \qquad \qquad \qquad \qquad $n = 1000,\delta = 0$ \quad \quad  \qquad \qquad  $n = 1000,\delta = 1$}   \\ 
  \multirow{2}{4em}{$d = 5$} & \emph{Homogeneous} & 0.954 & 0.914 & 0.820 & 0.978 & 0.954 & 0.864 \\ 
   & \emph{Heterogeneous}  & 0.978 & 0.956 & 0.842 & 0.988 & 0.954 & 0.892 \\ 
  \multirow{2}{4em}{$d = 10$} &\emph{Homogeneous}  & 0.996 & 0.986 & 0.926 & 1.000 & 0.984 & 0.918 \\ 
   & \emph{Heterogeneous}  & 0.992 & 0.976 & 0.920 & 0.998 & 0.986 & 0.918 \\ 
 \multirow{2}{4em}{$d = 20$}  & \emph{Homogeneous}  & 0.998 & 0.996 & 0.944 & 1.000 & 0.994 & 0.952 \\ 
   & \emph{Heterogeneous} & 0.996 & 0.998 & 0.958 & 1.000 & 0.996 & 0.956 \\ 
   \hline
   \multicolumn{7}{c}{\qquad \qquad \qquad \qquad \qquad \qquad \qquad $n = 2000,\delta = 0$ \quad \quad  \qquad \qquad  $n = 2000,\delta = 1$}   \\ 
\multirow{2}{4em}{$d = 5$} & \emph{Homogeneous} & 0.998 & 0.994 & 0.928 & 0.996 & 0.992 & 0.946 \\ 
   & \emph{Heterogeneous} & 1.000 & 0.994 & 0.956 & 1.000 & 0.998 & 0.970 \\ 
  \multirow{2}{4em}{$d = 10$} & \emph{Homogeneous} & 1.000 & 0.996 & 0.958 & 1.000 & 0.998 & 0.988 \\ 
   & \emph{Heterogeneous} & 1.000 & 0.998 & 0.974 & 1.000 & 0.998 & 0.984 \\ 
  \multirow{2}{4em}{$d = 20$} & \emph{Homogeneous} & 1.000 & 1.000 & 0.992 & 1.000 & 1.000 & 0.992 \\ 
   & \emph{Heterogeneous} & 1.000 & 1.000 & 0.994 & 1.000 & 1.000 & 0.988 \\ 
   \hline
   \multicolumn{7}{c}{\qquad \qquad \qquad \qquad \qquad \qquad \qquad $n = 5000,\delta = 0$ \quad \quad  \qquad \qquad  $n = 5000,\delta = 1$}   \\ 
  \multirow{2}{4em}{$d = 5$}  & \emph{Homogeneous} & 1.000 & 1.000 & 0.994 & 1.000 & 1.000 & 0.998 \\ 
    & \emph{Heterogeneous}  & 1.000 & 1.000 & 0.998 & 1.000 & 1.000 & 0.996 \\ 
  \multirow{2}{4em}{$d = 10$}  & \emph{Homogeneous}  & 1.000 & 1.000 & 1.000 & 1.000 & 1.000 & 1.000 \\ 
    & \emph{Heterogeneous} & 1.000 & 1.000 & 1.000 & 1.000 & 1.000 & 1.000 \\ 
  \multirow{2}{4em}{$d = 20$}  & \emph{Homogeneous}& 1.000 & 1.000 & 1.000 & 1.000 & 1.000 & 1.000 \\ 
    & \emph{Heterogeneous} & 1.000 & 1.000 & 1.000 & 1.000 & 1.000 & 0.996 \\ 
    \hline
\end{tabular}
\caption{Reported coverage probability of the estimated bounds for the ATE across different combinations of factors.}
\label{tab: ATE}
\end{table}

\begin{table}[!htb] 
\centering
\begin{tabular}{  cccccccc  p{5cm} }
\hline
 &  Outcome model & $p = 0.7$ & $p = 0.8$ & $p = 0.9$ & $p = 0.7$ & $p = 0.8$ & $p = 0.9$ \\ 
 \hline
   \multicolumn{7}{c}{\qquad \qquad \qquad \qquad \qquad \qquad \qquad $n = 1000,\delta = 0$ \quad \quad  \qquad \qquad  $n = 1000,\delta = 1$}   \\ 
  \multirow{2}{4em}{$d = 5$} & \emph{Homogeneous} & 0.718 & 0.692 & 0.590 & 0.756 & 0.772 & 0.670 \\ 
   & \emph{Heterogeneous} & 0.756 & 0.740 & 0.666 & 0.788 & 0.752 & 0.696 \\ 
  \multirow{2}{4em}{$d = 10$} &\emph{Homogeneous} & 0.802 & 0.828 & 0.786 & 0.818 & 0.824 & 0.738 \\ 
   & \emph{Heterogeneous}  & 0.836 & 0.830 & 0.748 & 0.836 & 0.844 & 0.746 \\  
 \multirow{2}{4em}{$d = 20$}  & \emph{Homogeneous} & 0.900 & 0.900 & 0.840 & 0.890 & 0.886 & 0.828 \\ 
   & \emph{Heterogeneous} & 0.900 & 0.934 & 0.866 & 0.886 & 0.886 & 0.816 \\ 
   \hline
   \multicolumn{7}{c}{\qquad \qquad \qquad \qquad \qquad \qquad \qquad $n = 2000,\delta = 0$ \quad \quad  \qquad \qquad  $n = 2000,\delta = 1$}   \\ 
\multirow{2}{4em}{$d = 5$} & \emph{Homogeneous} & 0.846 & 0.852 & 0.740 & 0.864 & 0.854 & 0.808 \\ 
   & \emph{Heterogeneous} & 0.866 & 0.860 & 0.782 & 0.860 & 0.868 & 0.838 \\ 
  \multirow{2}{4em}{$d = 10$} & \emph{Homogeneous}& 0.914 & 0.924 & 0.872 & 0.924 & 0.932 & 0.896 \\ 
   & \emph{Heterogeneous}& 0.936 & 0.940 & 0.902 & 0.912 & 0.908 & 0.890 \\ 
  \multirow{2}{4em}{$d = 20$} & \emph{Homogeneous} & 0.960 & 0.958 & 0.932 & 0.958 & 0.962 & 0.938 \\ 
   & \emph{Heterogeneous} & 0.968 & 0.976 & 0.952 & 0.960 & 0.968 & 0.926 \\ 
   \hline
   \multicolumn{7}{c}{\qquad \qquad \qquad \qquad \qquad \qquad \qquad $n = 5000,\delta = 0$ 
   \quad \quad  \qquad \qquad  $n = 5000,\delta = 1$}   \\ 
  \multirow{2}{4em}{$d = 5$}  & \emph{Homogeneous}  & 0.944 & 0.926 & 0.892 & 0.926 & 0.922 & 0.916 \\ 
    & \emph{Heterogeneous}  & 0.936 & 0.934 & 0.916 & 0.926 & 0.932 & 0.916 \\ 
  \multirow{2}{4em}{$d = 10$}  & \emph{Homogeneous} & 0.972 & 0.978 & 0.972 & 0.974 & 0.986 & 0.958 \\ 
    & \emph{Heterogeneous} & 0.988 & 0.982 & 0.968 & 0.980 & 0.992 & 0.964 \\ 
  \multirow{2}{4em}{$d = 20$}  & \emph{Homogeneous}& 0.984 & 0.996 & 0.992 & 0.990 & 0.998 & 0.990 \\ 
    & \emph{Heterogeneous} & 0.990 & 0.998 & 0.990 & 0.988 & 0.998 & 0.992 \\ 
    \hline
\end{tabular}
\caption{Reported coverage probability of the estimated bounds for the CATE at $X_1 = 1$ across different combinations of factors.}
\label{tab: CATE}
\end{table}

\begin{table}[!htb] 
\centering
\begin{tabular}{  cccccccc  p{5cm} }
\hline
 &  Outcome model & $p = 0.7$ & $p = 0.8$ & $p = 0.9$ & $p = 0.7$ & $p = 0.8$ & $p = 0.9$ \\ 
 \hline
   \multicolumn{7}{c}{\qquad \qquad \qquad \qquad \qquad \qquad \qquad $n = 1000,\delta = 0$ \quad \quad  \qquad \qquad  $n = 1000,\delta = 1$}   \\ 
  \multirow{2}{4em}{$d = 5$} & \emph{Homogeneous} & 1.461 & 1.417 & 1.337 & 1.413 & 1.238 & 1.276 \\ 
   & \emph{Heterogeneous}  & 1.459 & 1.376 & 1.462 & 1.373 & 1.416 & 1.211 \\ 
  \multirow{2}{4em}{$d = 10$} &\emph{Homogeneous}  & 1.553 & 1.372 & 1.415 & 1.463 & 1.463 & 1.323 \\ 
   & \emph{Heterogeneous}  & 1.482 & 1.409 & 1.318 & 1.462 & 1.405 & 1.288 \\ 
 \multirow{2}{4em}{$d = 20$}  & \emph{Homogeneous}  & 1.451 & 1.377 & 1.233 & 1.58 & 1.358 & 1.383 \\ 
   & \emph{Heterogeneous} & 1.513 & 1.297 & 1.384 & 1.644 & 1.386 & 1.312 \\ 
   \hline
   \multicolumn{7}{c}{\qquad \qquad \qquad \qquad \qquad \qquad \qquad $n = 2000,\delta = 0$ \quad \quad  \qquad \qquad  $n = 2000,\delta = 1$}   \\ 
\multirow{2}{4em}{$d = 5$} & \emph{Homogeneous} & 1.514 & 1.445 & 1.158 & 1.465 & 1.376 & 1.217 \\ 
   & \emph{Heterogeneous} & 1.617 & 1.368 & 1.122 & 1.618 & 1.392 & 1.205 \\ 
  \multirow{2}{4em}{$d = 10$} & \emph{Homogeneous} & 1.443 & 1.331 & 1.347 & 1.580 & 1.423 & 1.283 \\ 
   & \emph{Heterogeneous} & 1.485 & 1.385 & 1.422 & 1.569 & 1.223 & 1.385 \\ 
  \multirow{2}{4em}{$d = 20$} & \emph{Homogeneous} & 1.517 & 1.581 & 1.232 & 1.474 & 1.317 & 1.282 \\ 
   & \emph{Heterogeneous} & 1.403 & 1.423 & 1.427 & 1.488 & 1.401 & 1.246 \\ 
   \hline
   \multicolumn{7}{c}{\qquad \qquad \qquad \qquad \qquad \qquad \qquad $n = 5000,\delta = 0$ \quad \quad  \qquad \qquad  $n = 5000,\delta = 1$}   \\ 
  \multirow{2}{4em}{$d = 5$}  & \emph{Homogeneous} & 1.552 & 1.335 & 1.271 & 1.436 & 1.314 & 1.243 \\ 
    & \emph{Heterogeneous}  & 1.538 & 1.471 & 1.207 & 1.631 & 1.268 & 1.537 \\ 
  \multirow{2}{4em}{$d = 10$}  & \emph{Homogeneous}  & 1.496 & 1.436 & 1.266 & 1.452 & 1.534 & 1.253 \\ 
    & \emph{Heterogeneous} & 1.535 & 1.321 & 1.384 & 1.584 & 1.371 & 1.288 \\ 
  \multirow{2}{4em}{$d = 20$}  & \emph{Homogeneous}& 1.491 & 1.362 & 1.319 & 1.516 & 1.442 & 1.334 \\ 
    & \emph{Heterogeneous} & 1.446 & 1.438 & 1.261 & 1.459 & 1.291 & 1.333 \\ 
    \hline
\end{tabular}
\caption{Average relative length of the $95\%$ confidence interval for the ATE across different combinations of factors. Relative Length = $(Upper_{C.I.} - Lower_{C.I.})/(Upper - Lower)$ where $Upper_{C.I.}, Lower_{C.I.}$ represent upper and lower confidence interval while $Upper, Lower$ represent upper bound and lower bound.  }
\label{tab: relative length CI ATE}
\end{table}

\begin{table}[!htb] 
\centering
\begin{tabular}{  cccccccc  p{5cm} }
\hline
 &  Outcome model & $p = 0.7$ & $p = 0.8$ & $p = 0.9$ & $p = 0.7$ & $p = 0.8$ & $p = 0.9$ \\ 
 \hline
   \multicolumn{7}{c}{\qquad \qquad \qquad \qquad \qquad \qquad \qquad $n = 1000,\delta = 0$ \quad \quad  \qquad \qquad  $n = 1000,\delta = 1$}   \\ 
  \multirow{2}{4em}{$d = 5$} & \emph{Homogeneous} & 1.596 & 1.626 & 1.457 & 1.637 & 1.537 & 1.338 \\ 
   & \emph{Heterogeneous}  & 1.658 & 1.532 & 1.439 & 1.637 & 1.471 & 1.275 \\ 
  \multirow{2}{4em}{$d = 10$} &\emph{Homogeneous}  & 1.699 & 1.624 & 1.419 & 1.678 & 1.521 & 1.304 \\ 
   & \emph{Heterogeneous}  & 1.635 & 1.495 & 1.432 & 1.789 & 1.512 & 1.465 \\ 
 \multirow{2}{4em}{$d = 20$}  & \emph{Homogeneous}  & 1.488 & 1.487 & 1.433 & 1.575 & 1.427 & 1.344 \\ 
   & \emph{Heterogeneous} & 1.477 & 1.382 & 1.391 & 1.619 & 1.516 & 1.382 \\ 
   \hline
   \multicolumn{7}{c}{\qquad \qquad \qquad \qquad \qquad \qquad \qquad $n = 2000,\delta = 0$ \quad \quad  \qquad \qquad  $n = 2000,\delta = 1$}   \\ 
\multirow{2}{4em}{$d = 5$} & \emph{Homogeneous} & 1.569 & 1.456 & 1.358 & 1.662 & 1.677 & 1.419 \\ 
   & \emph{Heterogeneous} & 1.712 & 1.483 & 1.323 & 1.713 & 1.426 & 1.424 \\ 
  \multirow{2}{4em}{$d = 10$} & \emph{Homogeneous} & 1.648 & 1.404 & 1.286 & 1.603 & 1.262 & 1.392\\ 
   & \emph{Heterogeneous} & 1.396 & 1.497 & 1.402 & 1.617 & 1.438 & 1.472 \\ 
  \multirow{2}{4em}{$d = 20$} & \emph{Homogeneous} & 1.639 & 1.458 & 1.416 & 1.522 & 1.561 & 1.377 \\ 
   & \emph{Heterogeneous} & 1.598 & 1.501 & 1.407 & 1.609 & 1.494 & 1.375 \\ 
   \hline
   \multicolumn{7}{c}{\qquad \qquad \qquad \qquad \qquad \qquad \qquad $n = 5000,\delta = 0$ \quad \quad  \qquad \qquad  $n = 5000,\delta = 1$}   \\ 
  \multirow{2}{4em}{$d = 5$}  & \emph{Homogeneous} & 1.569 & 1.453 & 1.359 & 1.551 & 1.408 & 1.268 \\ 
    & \emph{Heterogeneous}  & 1.587 & 1.485 & 1.399 & 1.623 & 1.571 & 1.475 \\ 
  \multirow{2}{4em}{$d = 10$}  & \emph{Homogeneous}  & 1.581 & 1.536 & 1.385 & 1.556 & 1.538 & 1.457 \\ 
    & \emph{Heterogeneous} & 1.539 & 1.438 & 1.357 & 1.729 & 1.623 & 1.347 \\ 
  \multirow{2}{4em}{$d = 20$}  & \emph{Homogeneous}& 1.536 & 1.465 & 1.362 & 1.592 & 1.508 & 1.531 \\ 
    & \emph{Heterogeneous} & 1.631 & 1.473 & 1.366 & 1.557 & 1.534 & 1.475 \\ 
    \hline
\end{tabular}
\caption{Average relative length of the $95\%$ confidence interval for the CATE at $X_1 = 1$ across different combinations of factors. Relative Length = $(Upper_{C.I.} - Lower_{C.I.})/(Upper - Lower)$ where $Upper_{C.I.}, Lower_{C.I.}$ represent upper and lower confidence interval while $Upper, Lower$ represent upper bound and lower bound.  }
\label{tab: relative length CI CATE}
\end{table}

\begin{table}[!htb]
\centering
\caption{Demographic distribution of data example stratified by current smoking status from NHANES 2017-2018.}
\label{tab:demographics NHANES}
\resizebox{1\textwidth}{!}{
\begin{tabular}{lccc}
\hline
 \multirow{3}{*}{\begin{tabular}{c}\textbf{Characteristic}\end{tabular}}
   & \multirow{3}{*}{\begin{tabular}{c} Non-smoker \\  ($n = 1908$)\end{tabular}} &
   \multirow{3}{*}{\begin{tabular}{c}Current smoker \\ ($n = 381$)\end{tabular}} &
 \\ \\ \\
\hline
    \textbf{Age at enrollment} & 45(32,57) & 46(35,57) 	\\
    \textbf{Gender, No.(\%)} \\
  \hspace{0.3cm} Male & 786(41) & 226(59) \\ 
  \hspace{0.3cm} Female & 1122(59)	 & 155(41)   \\ 
    \textbf{Blood level of lead (umol/L)} & 0.04(0.02,0.06) & 0.06(0.04,0.09)  \\
   \textbf{past used cocaine/heroin/methamphetamine, No.(\%)} \\
  \hspace{0.3cm} No & 1780(93.3) & 232(61)  \\ 
  \hspace{0.3cm} Yes & 128(6.7)	 & 149(39)   \\ 
  \textbf{Blood level of cadmium (nmol/L)} & 2(1.3,3.3) & 10.1(6.8,14.9)  \\
  \textbf{Ethnicity No.(\%)} \\
  \hspace{0.3cm} Mexican American & 310(16) & 12(3.1)  \\ 
  \hspace{0.3cm} Other Hispanic & 197(10) & 16(4.2)  \\ 
  \hspace{0.3cm} Non-Hispanic White & 508(27) & 217(57)  \\ 
  \hspace{0.3cm} Non-Hispanic Black & 461(24) & 77(20)  \\ 
  \hspace{0.3cm} Non-Hispanic Asian & 351(18) & 18(4.7)  \\ 
  \hspace{0.3cm} Other Race - Including Multi-Racial & 81(4.2) & 41(11)  \\ 
    \textbf{Ratio of family income to poverty} & 2.64(1.36,4.80) & 1.27(0.81,2.15)  \\
  \textbf{Education, No.(\%)} \\
  \hspace{0.3cm} Less than 9th grade & 116(6.1) & 12(3.1)  \\ 
  \hspace{0.3cm} 9-11th grade (Includes 12th grade with no diploma) & 146(7.7) & 77(20)  \\ 
  \hspace{0.3cm} High school graduate/GED or equivalent & 391(20) & 137(36)  \\ 
  \hspace{0.3cm} Some college or AA degree & 629(33) & 132(35) \\ 
  \hspace{0.3cm} College graduate or above & 625(33) & 23(6)  \\ 
\hline
\end{tabular}}
\end{table}

\end{document}